\documentclass[onecolumn]{IEEEtran}
\usepackage[utf8]{inputenc}
\usepackage{amsmath, amssymb, amsthm}
\usepackage{authblk}
\usepackage{mathrsfs}
\usepackage{geometry}
\usepackage{hyperref}
\usepackage{xcolor}
\usepackage[switch]{lineno}
\usepackage{comment}
\newtheorem{theorem}{Theorem}[section]
\newtheorem{lemma}[theorem]{Lemma}

\newtheorem{corollary}[theorem]{Corollary}
\newtheorem{definition}[theorem]{Definition}
\newtheorem{remark}[theorem]{Remark}
\newtheorem{example}[theorem]{Example}

\newcommand{\Lb}{\mathbb{L}}
\newcommand{\Kb}{\mathbb{K}}
\newcommand{\Fb}{\mathbb{F}}
\newcommand{\Rcal}{\mathcal{R}}
\newcommand{\N}{\mathrm{N}}
\newcommand{\Tr}{\mathrm{Tr}}

\newcommand{\ev}{\operatorname{ev}}

\newcommand{\wt}{\mathrm{wt}}

\title{Twisted and Twisted Linearized Reed--Solomon Codes, LCD and ACD MDS constructions}

\author[1]{Sanjit Bhowmick}
\author[1]{Kuntal Deka}
\author[2]{Edgar Martínez-Moro\thanks{E. Martínez-Moro is partially supported by Grant SAGACT-1
MCIN/AEI/10.13039/501100011033 y FEDER ``Una manera de hacer Europa''
PID2022-138906NB-C21 (2023/2027)}}

\affil[1]{{\small Department of Electronics and Electrical Engineering\\
Indian Institute of Technology Guwahati\\
Assam, 781039, India.\\
E-mail: \rmfamily{sanjitbhowmick@rnd.iitg.ac.in; kuntaldeka@iitg.ac.in}}}

\affil[2]{{\small Institute of Mathematics, University of Valladolid,\\
Valladolid, Spain \\
E-mail: \rmfamily{edgar.martinez@uva.es}}}

\date{\today}

\begin{document}
\maketitle
\begin{abstract}
We investigate a natural subfamily of twisted linearized Reed--Solomon (TLRS) codes in the sum-rank metric, where the twist is applied only to the constant term. We establish a simple necessary and sufficient condition for these codes to be linear complementary dual (LCD): the twisting parameter \(\eta\) must satisfy \(\eta^2 \neq -1\) in the underlying field. This criterion is independent of the evaluation subgroup, the dimension parameter, and the twisting exponent (subject only to a mild restriction on the code length).
Furthermore, we construct infinite families of additive twisted linearized Reed--Solomon codes that are simultaneously additive complementary dual (ACD) and maximum distance separable (MDS) over quadratic extensions \(\mathbb{F}_{q^2}\), with respect to the trace-Hermitian inner product. These codes are explicit and achieve optimal parameters for all admissible lengths.
\end{abstract}

{\bf Keywords:} Twisted linearized Reed--Solomon (TLRS) codes, sum-rank metric, LCD codes, additive complementary dual (ACD) and maximum distance separable (MDS).

\section{Introduction}

Twisted Reed--Solomon (TRS) codes were introduced in~\cite{Beelen,9691371} as evaluation codes in the Hamming metric that achieve the maximum distance separable (MDS) property. The construction was inspired by twisted Gabidulin codes~\cite{Sheekey}, which are known to be maximum-rank-distance (MRD) codes but are not equivalent to classical Gabidulin codes (the rank-metric analogs of Reed--Solomon codes).
Linearized Reed--Solomon (LRS) codes have recently emerged as a powerful generalization that unifies classical Reed--Solomon and Gabidulin codes within the sum-rank metric framework~\cite{Liu2025LRS}. Defined as evaluation codes over skew polynomial rings, LRS codes attain the Singleton bound in the sum-rank metric and are therefore maximum sum-rank distance (MSRD) codes. Since the sum-rank metric naturally interpolates between the Hamming and rank metrics, LRS codes provide a unified algebraic approach that is particularly well-suited for applications such as multishot network coding, distributed storage, and space-time coding.

A key practical motivation in modern coding theory is the design of codes with support-constrained generator matrices, i.e., matrices in which certain entries must be zero. Such constraints arise naturally in distributed and networked systems, for instance in wireless cooperative data exchange and multi-source network coding, where encoding operations are restricted by locality or the underlying communication topology~\cite{Liu2025LRS}. 
In the classical Hamming and rank-metric settings, the existence of optimal codes under support constraints is governed by the well-known GM-MDS and GM-MRD conditions, respectively. Extending this line of research to the sum-rank metric, recent results establish analogous combinatorial conditions that guarantee the existence of maximum sum-rank distance (MSRD) codes. In particular, LRS codes provide an explicit algebraic framework for constructing support-constrained MSRD codes under suitable field-size conditions, thereby generalizing the corresponding results for Reed--Solomon and Gabidulin codes~\cite{Liu2025LRS}.

Beyond code construction, substantial progress has also been achieved in decoding algorithms for LRS codes. In particular, syndrome-based error-erasure decoding algorithms for interleaved LRS codes have been developed that efficiently correct both errors and erasures in the sum-rank metric~\cite{Hormann2026}. These methods exploit structured interleaving, including both vertical and horizontal schemes, to significantly improve decoding performance against burst errors and correlated failures. The resulting decoders offer strong probabilistic performance guarantees and naturally extend classical decoding techniques (such as those for Reed--Solomon and Gabidulin codes) to the sum-rank metric, thereby enabling robust and efficient recovery in applications such as multishot network coding and distributed storage systems. 
Together, these advances demonstrate that LRS codes constitute a versatile and powerful framework that bridges theory and practice in modern coding theory. Their applications range from distributed communication systems to secure and reliable data transmission.

On the other hand, the theory of codes in the sum-rank metric has seen significant development in recent years, largely due to the foundational contributions of Martínez-Peñas~\cite{Umberto} and subsequent works. These codes naturally generalize both classical Hamming-metric codes and rank-metric codes. They arise in a variety of applications, including multishot network coding, distributed storage systems, and space-time coding.
Twisted linearized Reed--Solomon (TLRS) codes were introduced in~\cite{Neri2022b} as a broad family of MSRD codes. Their construction builds upon the algebraic framework developed for sum-rank metric codes. More precisely, TLRS codes are realized as evaluations of certain subsets of a quotient algebra of \(\theta\)-skew polynomials. This approach provides an alternative interpretation of the sum-rank metric and, as shown in~\cite{Neri2022b}, is isometric to the standard sum-rank metric framework.

Linear complementary dual (LCD) codes are linear codes \(\mathcal{C}\) satisfying \(\mathcal{C} \cap \mathcal{C}^\perp = \{\mathbf{0}\}\). They were introduced by Massey~\cite{Massey1992} and have found important applications in cryptography, particularly for countering side-channel and fault-injection attacks~\cite{Carlet2016}. Subsequent work by Jin~\cite{Jin2017} constructed new families of MDS LCD codes using generalized Reed--Solomon codes. In recent years, LCD codes and their variants including linear complementary pairs and codes with small hulls  have received considerable attention~\cite{CMTYP2018,CMTQ2019,Chen2023,LSEL2024,Sok2022}.
While LCD codes have been extensively studied in the Hamming and rank metrics (see, e.g.,~\cite{Survey,Liu2019,liang2024} and references therein), their theory in the sum-rank metric remains largely unexplored. To the best of the authors' knowledge, combining the LCD property with the sum-rank metric is still in its early stages, yet it holds strong potential for enhancing security in networked and distributed systems. 
On the other hand, additive complementary dual (ACD) codes, recently introduced and studied in~\cite{Shi2022,Choi2023}, generalize LCD codes by considering \(\mathbb{F}_q\)-linear codes over extension fields \(\mathbb{F}_{q^m}\). This extension offers greater flexibility in parameter selection and opens new avenues for code design.

In the Hamming metric, TRS codes have been successfully employed to construct several families of LCD MDS codes. Liu \textit{et al.}~\cite{liu2021} presented two explicit constructions of MDS twisted Reed--Solomon codes and derived LCD MDS codes from self-orthogonal codes. Liang \textit{et al.}~\cite{liang2025} systematically studied \((\mathcal{L},\mathcal{P})\)-twisted generalized Reed--Solomon codes, establishing parity-check matrices, self-orthogonality conditions, and a complete characterization of near-MDS codes. Huang \textit{et al.}~\cite{huang2023} constructed three additional families of LCD MDS codes from twisted Reed--Solomon codes by exploiting factorizations of binomials, trinomials, and linearized polynomials. Finally, Liang \textit{et al.}~\cite{liang2024} unified and extended these results by introducing \((*)\)- \((\mathcal{L},\mathcal{P})\)-twisted generalized Reed--Solomon codes, thereby obtaining four broad classes of LCD codes.

 In this paper, we focus on a natural subfamily of TLRS codes (see Definition~\ref{def:family}) in which only the constant term is twisted. For this family, we provide a simple and precise condition guaranteeing the LCD property (Theorem~\ref{thm:main}): the code is LCD if and only if \(\eta^2 \neq -1\) in the underlying field, where \(\eta\) appears in the definition of the twist.
All previously known constructions of LCD MDS codes from twisted Reed--Solomon codes operate exclusively in the Hamming metric. They rely on classical polynomial rings \(\mathbb{F}_q[x]\) and use tools such as Vandermonde matrices, power sums, and elementary linear algebra. In contrast, our approach works in the sum-rank metric and is based on the theory of skew polynomial rings developed in~\cite{Neri2022b}, which is fundamentally different. The resulting LCD criterion is universal, independent of the evaluation subgroup and the code parameters which exhibits a level of simplicity not observed in earlier works.

In the second part of the paper (Section~\ref{sec:acd}), we consider the trace-Hermitian inner product and construct ACD MSRD codes over \(\mathbb{F}_{q^2}\) for all prime powers \(q \equiv 1 \pmod{4}\) with \(q \ge 5\). To obtain an additive twisted Reed--Solomon code, we restrict the coefficients of skew polynomials such that the constant term and the coefficient of \(X^k\) lie in \(\mathbb{F}_q\), while the remaining coefficients lie in \(\mathbb{F}_{q^2}\). This yields \(\mathbb{F}_q\)-dimension \(2k\) and achieves the Singleton bound \(d = \ell - k + 1\) in the sum-rank metric. To the best of our knowledge, no prior work has used twisted constructions to obtain ACD codes over extension fields. Our construction provides the first explicit family of ACD MSRD codes over \(\mathbb{F}_{q^2}\) for all such \(q\), with flexible lengths \(\ell \ge 2k\). It relies on a suitable element \(\gamma = \alpha\) satisfying \(\alpha^q = -\alpha\) together with a probabilistic argument to guarantee the ACD property.

The paper is organized as follows. Section~\ref{sec:pre} provides the necessary background on skew polynomial rings, the sum-rank metric, and the associated evaluation map. In Section~\ref{sec:lcd}, we introduce the family of TLRS codes and establish the main LCD characterization (Theorem~\ref{thm:main}). Section~\ref{sec:acd} is devoted to ACD codes: we present an explicit construction of ACD MSRD codes over \(\mathbb{F}_{q^2}\), prove that these codes achieve the MDS property (Corollary~\ref{cor:mds_sufficient}), and derive necessary and sufficient conditions for the ACD property (Theorem~\ref{thm:acd_condition}). Finally, we prove the existence of such codes for all admissible parameters (Theorem~\ref{thm:main_acd}).

\section{Preliminaries}\label{sec:pre}
In this section, we  introduce the basic facts on TLRS codes.  For a detailed treatment of them within the skew polynomial framework and their relationship to rank-metric codes, the reader is referred to \cite{Neri2022b}.

Throughout this paper, let \(\mathbb{K} = \mathbb{F}_q\) denote a finite field with \(q\) elements, and let \(\mathbb{L} = \mathbb{F}_{q^r}\) be the unique (up to isomorphism) Galois extension of \(\mathbb{K}\) of degree \(r\). The Frobenius automorphism generating \(\mathrm{Gal}(\mathbb{L}|\mathbb{K})\) is denoted by \(\theta \colon x \mapsto x^q\).
Let \(\Lambda \subseteq \mathbb{K}^*\) be the unique multiplicative subgroup of order \(\ell\), where \(\ell\) divides \(q-1\). Thus, $\ell$ is an invertible element  in the ring $\mathbb{Z}_q$. Since \(\Lambda\) is cyclic, we fix a generator \(\gamma \in \Lambda\) and label its elements as
$
\lambda_i = \gamma^{i-1} \quad \text{for} \quad i = 1, \dots, \ell.
$
We denote the norm map of the extension \(\mathbb{L}|\mathbb{K}\) by \(\mathrm{N}_{\mathbb{L}|\mathbb{K}}\). Because the norm is surjective onto \(\mathbb{K}^*\), for each \(i = 1, \dots, \ell\) there exists an element \(\alpha_i \in \mathbb{L}\) such that
$
\mathrm{N}_{\mathbb{L}|\mathbb{K}}(\alpha_i) = \lambda_i.
$

The \(\theta\)-skew polynomial ring \(\mathbb{L}[X;\theta]\) consists of all polynomials \(h(X) = \sum_{i=0}^n a_i X^i\) with \(a_i \in \mathbb{L}\) and \(n \ge 0\), equipped with the usual addition and the multiplication rule
$
X a = \theta(a) X
$
for all \(a \in \mathbb{L}\), extended to arbitrary polynomials by distributivity.
Define the skew polynomial
\[
H_\Lambda(X) = \prod_{i=1}^\ell (X^r - \lambda_i) \in \mathbb{L}[X;\theta].
\]
The quotient algebra \(\mathcal{R}_\Lambda = \mathbb{L}[X;\theta] / (H_\Lambda(X))\) is isomorphic, as a \(\mathbb{K}\)-algebra, to \((\mathbb{L}[\theta])^\ell\) via the evaluation map
\begin{equation}\label{eq:eval}
\Phi_\alpha \colon \mathcal{R}_\Lambda \longrightarrow (\mathbb{L}[\theta])^\ell, \qquad
\Phi_\alpha(F) = \bigl(F(\alpha_1),\dots,F(\alpha_\ell)\bigr).
\end{equation}
Here, the evaluation at each \(\alpha_i\) is defined by the rule
$
X^j(\alpha_i) = \lambda_i^j \theta^j,
$
where \(\theta^j\) denotes the \(j\)-fold composition of the Frobenius automorphism \(\theta\), and the right-hand side is viewed as an element of the group algebra \(\mathbb{L}[\theta]\). The map \(\Phi_\alpha\) is a \(\mathbb{K}\)-algebra isomorphism (see \cite[Theorem 4.1]{Neri2022b}).
Also, \(\mathbb{L}[\theta]\) is an \(r\)-dimensional vector space over \(\mathbb{K}\) with basis \(\{1, \theta, \dots, \theta^{r-1}\}\). Its elements are polynomials in \(\theta\) with coefficients in \(\mathbb{L}\) and are called \(\theta\)-polynomials (or \(\theta\)-linearized polynomials).

\begin{definition}
A \emph{sum-rank metric code} is any \(\mathbb{K}\)-linear subspace \(C \subseteq \mathcal{R}_\Lambda\).
\end{definition}

\begin{definition}[Twisted linearized Reed--Solomon codes]\label{def:family}
Let \(1 \le k \le \ell r - 1\), \(0 \le h \le r-1\), and \(\eta \in \mathbb{L}^*\). The \emph{twisted linearized Reed--Solomon (TLRS) code} with parameters \((k,h,\eta)\) is the \(\mathbb{K}\)-linear subspace
\[
\mathcal{L}_k^\theta(\eta,h) = \Bigl\{ f_0 + f_1 X + \cdots + f_{k-1} X^{k-1} + \eta\,\theta^h(f_0)\,X^k \;\Bigm|\; f_i \in \mathbb{L} \Bigr\} \subseteq \mathcal{R}_\Lambda.
\]
\end{definition}

Note that \(\mathcal{L}_k^\theta(\eta,h)\) is a \(\mathbb{K}\)-vector space of dimension \(kr\). The first \(k\) coefficients \(f_0,\dots,f_{k-1}\) are free, and the twist term \(\eta\,\theta^h(f_0)X^k\) introduces no additional linear dependence over \(\mathbb{K}\).

Let \(\operatorname{Tr}_{\mathbb{L}|\mathbb{K}}\) denote the trace map of the extension \(\mathbb{L}|\mathbb{K}\). The standard \(\mathbb{K}\)-bilinear form on \(\mathcal{R}_\Lambda\) is defined by
\[
\langle F,G\rangle_\Lambda = \operatorname{Tr}_{\mathbb{L}|\mathbb{K}}\!\Bigl( \sum_{i=0}^{\ell r-1} f_i g_i \Bigr),
\]
where \(F = \sum f_i X^i\) and \(G = \sum g_i X^i\) are the unique representatives of degree less than \(\ell r\) in the residue classes \(F + (H_\Lambda)\) and \(G + (H_\Lambda)\), respectively.

Under the evaluation isomorphism~\eqref{eq:eval}, this form takes the explicit shape (see~\cite[Proposition~5.3]{Neri2022b})
\begin{equation}
\langle F,G\rangle_\Lambda = \ell^{-1} \sum_{i=1}^\ell \operatorname{Tr}_{\mathbb{L}|\mathbb{K}}\bigl( F(\alpha_i)\, G(\alpha_i^{-1}) \bigr).
\end{equation}
Note that we abuse the notation and call $\ell$ the image of the mapping of $\ell$ by the canonical ring homomorphism  $\mathbb Z\mapsto \Kb$ which 
sends an integer $m$ to $m \cdot 1_\Kb$, i.e., the sum of $m$ copies of the multiplicative
identity. Since $\Kb$ has the characteristic $p$, we have that the kernel is $\mathbb F_p$, thus the image of $\ell$ is  $\ell \mod p$ in $\mathbb F_p\subset \Kb$, and it is an invertible element.

\begin{definition}
Let \(C \subseteq \mathcal{R}_\Lambda\) be a \(\mathbb{K}\)-linear sum-rank metric code. Its \emph{dual code} with respect to \(\langle\,\cdot\,,\,\cdot\,\rangle_\Lambda\) is
\[
C^\perp = \bigl\{ G \in \mathcal{R}_\Lambda \;\bigm|\; \langle F,G\rangle_\Lambda = 0 \text{ for all } F \in C \bigr\}.
\]
\end{definition}
\section{Characterization of linear complementary dual codes}\label{sec:lcd}

 Given a fixed \(\mathbb{K}\)-basis \(\{\beta_1, \dots, \beta_r\}\) of \(\mathbb{L}\), a natural \(\mathbb{K}\)-basis for the TLRS code \(\mathcal{C} = \mathcal{L}_k^\theta(\eta,h)\) is given by
\begin{equation}\label{eq:basis}
\mathcal{B} = \bigl\{ \beta_t X^j \;\bigm|\; t=1,\dots,r,\; j=1,\dots,k-1 \bigr\} 
\;\cup\; 
\bigl\{ \beta_t + \eta\,\theta^h(\beta_t) X^k \;\bigm|\; t=1,\dots,r \bigr\}.
\end{equation}
When \(k=1\), the first set is empty and \(\mathcal{B}\) reduces to \(\{ \beta_t + \eta\,\theta^h(\beta_t) X \mid t=1,\dots,r \}\).
Note that clearly, this set contains exactly \(r(k-1) + r = kr\) elements and forms a \(\mathbb{K}\)-basis of \(\mathcal{C}\).
We will use the following technical lemma to compute the Gram matrix of \(\mathcal{C}\) with respect to the basis \(\mathcal{B}\) in~\eqref{eq:basis}.

\begin{lemma}\label{lem:trace}
Let \(\beta_t, \beta_u \in \mathbb{L}\) and \(h \in \mathbb{Z}\). Then
\[
\operatorname{Tr}_{\mathbb{L}|\mathbb{K}}\bigl( \theta^h(\beta_t) \cdot \theta^h(\beta_u) \bigr) = \operatorname{Tr}_{\mathbb{L}|\mathbb{K}}(\beta_t \cdot \beta_u).
\]
\end{lemma}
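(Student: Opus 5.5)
The plan is to use two facts: \(\theta\) is a field automorphism of \(\mathbb{L}\), and the trace is invariant under the Galois group \(\mathrm{Gal}(\mathbb{L}|\mathbb{K})\).

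First, because \(\theta^h\) is a ring homomorphism of \(\mathbb{L}\), we have \(\theta^h(\beta_t)\,\theta^h(\beta_u)=\theta^h(\beta_t\beta_u)\). The left-hand side of the statement therefore equals \(\operatorname{Tr}_{\mathbb{L}|\mathbb{K}}\bigl(\theta^h(\beta_t\beta_u)\bigr)\).

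Second, we show \(\operatorname{Tr}_{\mathbb{L}|\mathbb{K}}(\theta^h(x))=\operatorname{Tr}_{\mathbb{L}|\mathbb{K}}(x)\) for every \(x\in\mathbb{L}\). Recall that
\[
\operatorname{Tr}_{\mathbb{L}|\mathbb{K}}(x)=\sum_{j=0}^{r-1}\theta^j(x).
\]
Since \(\theta^r=\mathrm{id}\) on \(\mathbb{L}\), we get
\[
\operatorname{Tr}_{\mathbb{L}|\mathbb{K}}(\theta^h(x))=\sum_{j=0}^{r-1}\theta^{j+h}(x).
\]
Here the exponent \(j+h\) runs over a complete residue system modulo \(r\), so this sum is just a reindexing of the original one. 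This also handles negative \(h\), and any \(h\in\mathbb{Z}\) in general, because \(\theta^h\) depends only on \(h \bmod r\).

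Applying the second step with \(x=\beta_t\beta_u\) gives the claim. No step is a real obstacle. The only points to watch are that the reindexing uses \(\theta^r=\mathrm{id}\), and that the exponent \(h\) is read modulo \(r\), so that the statement for arbitrary \(h\in\mathbb{Z}\) makes sense.
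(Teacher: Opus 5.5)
Your proposal is correct and follows the same route as the paper: first use that \(\theta^h\) is multiplicative, then use invariance of the trace under the Galois group. The paper phrases the second step as \(\theta^h\) permuting the embeddings, and your explicit reindexing of \(\sum_{j=0}^{r-1}\theta^{j+h}(x)\) modulo \(r\) is just a concrete version of that argument, which also handles negative \(h\) cleanly.
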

\begin{proof}
Since \(\theta^h\) is a field automorphism of \(\mathbb{L}\), we have that
$
\theta^h(\beta_t) \cdot \theta^h(\beta_u) = \theta^h(\beta_t \cdot \beta_u)$.
The trace \(\operatorname{Tr}_{\mathbb{L}|\mathbb{K}}\) is the sum of all \(\mathbb{K}\)-embeddings of \(\mathbb{L}\) into an algebraic closure. Applying the automorphism \(\theta^h\) merely permutes these embeddings. Therefore, the value of the trace remains invariant $
\operatorname{Tr}_{\mathbb{L}|\mathbb{K}}\bigl( \theta^h(\beta_t \cdot \beta_u) \bigr) = \operatorname{Tr}_{\mathbb{L}|\mathbb{K}}(\beta_t \cdot \beta_u)$.
\end{proof}

The Gram matrix \(G\) of \(\mathcal{C} = \mathcal{L}_k^\theta(\eta,h)\) with respect to the basis \(\mathcal{B}\) in~\eqref{eq:basis} is determined by the following three types of inner products.
\begin{description}
 \item[\textbf{1.-}] Two elements from the first part of \(\mathcal{B}\), i.e., \(\beta_t X^j\) and \(\beta_u X^{j'}\) with \(1 \le j,j' \le k-1\).
The inner product is nonzero  when \(j = j'\), in that case
\[
\langle \beta_t X^j, \beta_u X^{j'} \rangle_\Lambda = \delta_{j,j'} \operatorname{Tr}_{\mathbb{L}/\mathbb{K}}(\beta_t \beta_u).
\]

\item[\textbf{2.-}] One element from each part of the set \(\mathcal{B}\), i.e., \(\beta_t X^j\) (\(1 \le j \le k-1\)) and \(\beta_u + \eta\theta^h(\beta_u)X^k\).
Since the supports of the coefficients are disjoint (the first vector has no \(X^0\) or \(X^k\) term when \(j \ge 1\), and the second has zero coefficient in degrees \(1\) to \(k-1\)), we have
\[
\langle \beta_t X^j,\; \beta_u + \eta\theta^h(\beta_u)X^k \rangle_\Lambda = 0.
\]

\item[\textbf{3.-}] Two elements from the second part of the set \(\mathcal{B}\), i.e.,
\(\beta_t + \eta\theta^h(\beta_t)X^k\) and \(\beta_u + \eta\theta^h(\beta_u)X^k\).
The only nonzero contributions come from degrees \(0\) and \(k\):
\[
\langle \beta_t + \eta\theta^h(\beta_t)X^k,\; \beta_u + \eta\theta^h(\beta_u)X^k \rangle_\Lambda
= \operatorname{Tr}_{\mathbb{L}|\mathbb{K}}(\beta_t\beta_u) + \operatorname{Tr}_{\mathbb{L}|\mathbb{K}}\bigl(\eta^2 \theta^h(\beta_t)\theta^h(\beta_u)\bigr).
\]
Using Lemma~\ref{lem:trace} and the \(\mathbb{K}\)-linearity of the trace, this simplifies to
\begin{align*}
\operatorname{Tr}_{\mathbb{L}|\mathbb{K}}(\beta_t\beta_u) + \operatorname{Tr}_{\mathbb{L}|\mathbb{K}}\bigl(\eta^2 \theta^h(\beta_t)\theta^h(\beta_u)\bigr)
&= \operatorname{Tr}_{\mathbb{L}|\mathbb{K}}\bigl( \theta^h(\beta_t)\theta^h(\beta_u) \bigr) + \operatorname{Tr}_{\mathbb{L}|\mathbb{K}}\bigl(\eta^2 \theta^h(\beta_t)\theta^h(\beta_u)\bigr) \\
&= \operatorname{Tr}_{\mathbb{L}|\mathbb{K}}\bigl( (1 + \eta^2) \theta^h(\beta_t)\theta^h(\beta_u) \bigr).
\end{align*}
Applying the automorphism \(\theta^{-h}\) (which leaves the trace invariant) yields
\[
\operatorname{Tr}_{\mathbb{L}|\mathbb{K}}\bigl( (1 + \eta^2) \theta^h(\beta_t)\theta^h(\beta_u) \bigr)
= \operatorname{Tr}_{\mathbb{L}|\mathbb{K}}\bigl( \theta^{-h}(1 + \eta^2) \cdot \beta_t \beta_u \bigr).
\]
Let \(\alpha := \theta^{-h}(1 + \eta^2) \in \mathbb{L}\). Then the \(r \times r\) block of the Gram matrix corresponding to the second part of the basis is
$
B = \bigl[ \operatorname{Tr}_{\mathbb{L}/\mathbb{K}}(\alpha \cdot \beta_t \beta_u) \bigr]_{t,u=1}^r$.
\end{description}

From the preceding calculations, if we let \(M\) be the \(r \times r\) matrix with entries
$
M_{t,u} = \operatorname{Tr}_{\mathbb{L}|\mathbb{K}}(\beta_t \beta_u)$, where  $t,u=1,\dots,r,
$,
then the Gram matrix \(G\) of \(\mathcal{C}\) with respect to the basis \(\mathcal{B}\) takes the block-diagonal form
\[
G = \begin{pmatrix}
I_{k-1} \otimes M & 0 \\[2mm]
0 & B
\end{pmatrix},
\]
where \(I_{k-1}\) denotes the \((k-1) \times (k-1)\) identity matrix (this block is absent when \(k=1\)).

Since the trace form is non-degenerate on \(\mathbb{L}|\mathbb{K}\), we have \(\det M \neq 0\) in \(\mathbb{K}\). Therefore,
\[
\det G = (\det M)^{k-1} \cdot \det B.
\]

The matrix \(B\) is the Gram matrix of the \(\mathbb{K}\)-bilinear form \((x,y) \mapsto \operatorname{Tr}_{\mathbb{L}|\mathbb{K}}(\alpha \, x y)\) with respect to the basis \(\{\beta_1,\dots,\beta_r\}\). This form is non-degenerate if and only if \(\alpha \neq 0\), because multiplication by a nonzero element of \(\mathbb{L}\) is a \(\mathbb{K}\)-linear isomorphism and the trace form itself is non-degenerate. Consequently,
$
\det B = 0 $ if and only if $\alpha = 0$, and
recalling that \(\alpha = \theta^{-h}(1 + \eta^2)\), we obtain
$
\det G = 0$ if and only if $\eta^2 = -1 \in\mathbb{L}$.
Thus, the Gram matrix \(G\) is singular if and only if \(\eta^2 = -1\) in the extension field \(\mathbb{L}\).

\begin{definition}\label{def:LCD}
A \(\mathbb{K}\)-linear sum-rank metric code \(\mathcal{C} \subseteq \mathcal{R}_\Lambda\) is called a \emph{linear complementary dual (LCD) code} if
$
\mathcal{C} \cap \mathcal{C}^\perp = \{\mathbf{0}\}
$ and $
\mathcal{C} \cup \mathcal{C}^\perp = \mathcal{R}_\Lambda
$.
\end{definition}

Note that when \(\mathcal{C}\) is a TLRS code of \(\mathbb{K}\)-dimension \(kr\), its dual \(\mathcal{C}^\perp\) has dimension \(\ell r - kr\) by~\cite[Proposition~6.11]{Neri2022b}. Therefore, it suffices to verify the intersection condition \(\mathcal{C} \cap \mathcal{C}^\perp = \{\mathbf{0}\}\) (as the direct-sum condition then follows automatically by dimension counting).
The following result completely characterizes when a TLRS code is LCD.

\begin{theorem}[LCD Characterization]\label{thm:main}
Let \(\mathcal{C} = \mathcal{L}_k^\theta(\eta,h)\) be a twisted linearized Reed--Solomon code in \(\mathcal{R}_\Lambda\) with parameters \(1 \le k \le \ell r - 1\) and \(\eta \in \mathbb{L}^*\). Then \(\mathcal{C}\) is an LCD code if and only if
$
1 + \eta^2 \neq 0 \quad \text{in } \mathbb{L}.
$
\end{theorem}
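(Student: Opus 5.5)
The plan is to reduce the LCD property to the nonsingularity of the Gram matrix \(G\) of \(\mathcal{C}\) with respect to the basis \(\mathcal{B}\) in~\eqref{eq:basis}, using the computation of \(G\) carried out just before the theorem. First I will show that \(\mathcal{C}\cap\mathcal{C}^\perp=\{\mathbf 0\}\) holds if and only if \(\det G\neq 0\). Write \(\mathcal{B}=\{b_1,\dots,b_{kr}\}\). An element \(F=\sum_s c_s b_s\in\mathcal{C}\), with \(c_s\in\mathbb{K}\), lies in \(\mathcal{C}^\perp\) if and only if \(\langle F,b_u\rangle_\Lambda=0\) for every \(u\). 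The form \(\langle\cdot,\cdot\rangle_\Lambda\) is \(\mathbb{K}\)-bilinear, so this condition reads \(c^{\top}G=0\) for the coordinate vector \(c=(c_1,\dots,c_{kr})\). Since \(\mathcal{B}\) is a basis, \(F\neq 0\) exactly when \(c\neq 0\). Hence a nonzero element of \(\mathcal{C}\cap\mathcal{C}^\perp\) exists if and only if \(G\) has a nontrivial left kernel, that is, if and only if \(\det G=0\).

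Next I will take from the preceding computation that
\[
G=\begin{pmatrix} I_{k-1}\otimes M & 0\\ 0 & B\end{pmatrix},
\qquad
\det G=(\det M)^{k-1}\det B.
\]
Here \(\det M\neq 0\) because the trace form is non-degenerate, and \(\det B=0\) if and only if \(\theta^{-h}(1+\eta^2)=0\). Since \(\theta^{-h}\) is a field automorphism, this happens if and only if \(1+\eta^2=0\). Combining this with the first step gives \(\mathcal{C}\cap\mathcal{C}^\perp=\{\mathbf 0\}\) if and only if \(\eta^2\neq -1\).

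The remaining condition in Definition~\ref{def:LCD} is that the code and its dual span \(\mathcal{R}_\Lambda\); the union as written should be read as a direct sum. I will derive it from the dimension count: \(\dim_{\mathbb{K}}\mathcal{C}=kr\), and \(\dim_{\mathbb{K}}\mathcal{C}^\perp=\ell r-kr\) by \cite[Proposition~6.11]{Neri2022b}. Alternatively, this follows from non-degeneracy of \(\langle\cdot,\cdot\rangle_\Lambda\), which is a trace form on coefficient vectors. Once the intersection is trivial, the sum is direct and has dimension \(\ell r=\dim_{\mathbb{K}}\mathcal{R}_\Lambda\). This gives the "if" direction. For "only if", when \(1+\eta^2=0\) a nonzero kernel vector of \(B\) produces a nonzero element of \(\mathcal{C}\cap\mathcal{C}^\perp\), so \(\mathcal{C}\) is not LCD.

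I expect the main obstacle to be justifying the inner product computations that produce \(G\). Those computations use the coefficient formula for \(\langle\cdot,\cdot\rangle_\Lambda\), which is valid only for representatives of degree less than \(\ell r\). The basis elements have degree at most \(k\le \ell r-1\), so they are already the reduced representatives and the block structure is legitimate. I will state this explicitly, together with the fact that the diagonal blocks \(I_{k-1}\otimes M\) account for all pairs \(j=j'\), before concluding.
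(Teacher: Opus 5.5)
Your proposal is correct and follows the paper's proof. You reduce LCD to invertibility of the Gram matrix of $\mathcal{B}$, use the block-diagonal form $\det G=(\det M)^{k-1}\det B$ with $B$ the Gram matrix of $(x,y)\mapsto\operatorname{Tr}_{\mathbb{L}|\mathbb{K}}(\theta^{-h}(1+\eta^2)xy)$, and obtain the spanning condition from a dimension count.

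One numerical slip, copied from the paper: $\mathcal{R}_\Lambda$ has left $\mathbb{L}$-basis $1,X,\dots,X^{\ell r-1}$, so $\dim_{\mathbb{K}}\mathcal{R}_\Lambda=\ell r^2$ and $\dim_{\mathbb{K}}\mathcal{C}^\perp=\ell r^2-kr$. The value $\ell r-kr$ is even negative once $k>\ell$, so rest the direct-sum step on your non-degeneracy argument instead, which gives $\dim\mathcal{C}+\dim\mathcal{C}^\perp=\dim\mathcal{R}_\Lambda$ without those specific numbers.
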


\begin{proof}
Since \(\dim_{\mathbb{K}} \mathcal{C} + \dim_{\mathbb{K}} \mathcal{C}^\perp = \dim_{\mathbb{K}} \mathcal{R}_\Lambda\), the code \(\mathcal{C}\) is LCD if and only if the restriction of the bilinear form \(\langle \cdot, \cdot \rangle_\Lambda\) to \(\mathcal{C}\) is non-degenerate.  { Indeed, if there exists a non‑zero $c\in \mathcal{C}$ with $\langle c,c'\rangle=0$ for all $c'\in \mathcal{C}$, then $c\in \mathcal{C}^\perp$, so $c\in \mathcal{C}\cap \mathcal{C}^\perp$. Conversely, if $0\neq c\in \mathcal{C}\cap \mathcal{C}^\perp$, then $\langle c,c'\rangle=0$ for every $c'\in \mathcal{C}$, so the restriction is degenerate.} This fact is equivalent to the Gram matrix of any $\Kb$-basis of $\mathcal{C}$ being invertible. Using the basis $\mathcal{B}$ described above, the Gram matrix $G$ satisfies
\[
\det G = 0 \quad \Longleftrightarrow \quad 1 + \eta^2 = 0 \quad \text{in } \mathbb{L}.
\]
Therefore, \(\mathcal{C} \cap \mathcal{C}^\perp \neq \{\mathbf{0}\}\) if and only if \(1 + \eta^2 = 0\), which completes the proof.
\end{proof}

\begin{remark}
The twisting exponent \(h\) does not appear in the final condition, as it is cancelled by the application of \(\theta^{-h}\). Moreover, the characterization is independent of the choice of the subgroup \(\Lambda\) (provided \(\ell\) is invertible in \(\mathbb{K}\)) and of the dimension parameter \(k\), as long as \(1 \le k \le \ell r - 1\). The boundary case \(k = \ell r\) would require reduction modulo \(H_\Lambda(X)\) and leads to a more involved analysis; we therefore restrict ourselves to \(k \le \ell r - 1\).
\end{remark}

We illustrate the theorem with explicit computations over the quadratic extension \(\mathbb{F}_{25}|\mathbb{F}_5\).

\begin{example}[LCD code]\label{ex:LCD}
Let \(\mathbb{K} = \mathbb{F}_5\) and \(\mathbb{L} = \mathbb{F}_{25} = \mathbb{F}_5[u]\) with \(u^2 = 2\). The Frobenius automorphism is \(\theta(x) = x^5\), satisfying \(\theta(u) = 4u\). Take \(\Lambda = \{1,4\} \subseteq \mathbb{F}_5^*\) (order \(\ell=2\)), \(k=1\), \(h=0\), and \(\eta = 2 + u\).

Then
\[
\eta^2 = (2+u)^2 = 4 + 4u + u^2 = 4 + 4u + 2 = 6 + 4u \equiv 1 + 4u \pmod{5}.
\]
Hence \(1 + \eta^2 = 2 + 4u \neq 0\) in \(\mathbb{F}_{25}\). By Theorem~\ref{thm:main}, \(\mathcal{C} = \mathcal{L}_1^\theta(\eta,0)\) is an LCD code.
To verify this fact directly, note that \(\mathcal{C} = \{ f_0 + \eta f_0 X \mid f_0 \in \mathbb{L} \}\). With respect to the \(\mathbb{F}_5\)-basis \(\{1,u\}\) of \(\mathbb{L}\), a basis of \(\mathcal{C}\) is
$
\{1 + \eta X,\; u + u\eta X\}$.
The Gram matrix with respect to this basis is
\[
G = \begin{pmatrix}
\operatorname{Tr}_{\mathbb{F}_{25}|\mathbb{F}_5}(1+\eta^2) & \operatorname{Tr}_{\mathbb{F}_{25}|\mathbb{F}_5}(u(1+\eta^2)) \\
\operatorname{Tr}_{\mathbb{F}_{25}|\mathbb{F}_5}(u(1+\eta^2)) & \operatorname{Tr}_{\mathbb{F}_{25}|\mathbb{F}_5}(u^2(1+\eta^2))
\end{pmatrix}.
\] Substituting \(1 + \eta^2 = 2 + 4u\) and using the known values \(\operatorname{Tr}_{\mathbb{F}_{25}|\mathbb{F}_5}(1) = 2\), \(\operatorname{Tr}_{\mathbb{F}_{25}|\mathbb{F}_5}(u) = 0\), \(\operatorname{Tr}_{\mathbb{F}_{25}|\mathbb{F}_5}(u^2) = \operatorname{Tr}_{\mathbb{F}_{25}|\mathbb{F}_5}(2) = 4\), direct computation yields
\[
G \equiv \begin{pmatrix} 4 & 1 \\ 1 & 3 \end{pmatrix} \pmod{5}.
\]
Since \(\det G = 12 - 1 = 11 \equiv 1 \pmod{5} \neq 0\), the form is non-degenerate and \(\mathcal{C}\) is indeed LCD.
\end{example}

\begin{example}[Non-LCD code]
Keep the same field extension and parameters as in Example~\ref{ex:LCD}, but choose \(\eta = 2\). Then \(\eta^2 = 4 \equiv -1 \pmod{5}\), so \(1 + \eta^2 = 0\). By Theorem~\ref{thm:main}, \(\mathcal{C} = \mathcal{L}_1^\theta(2,0)\) is not an LCD code.
Direct verification shows that the Gram matrix with respect to the basis \(\{1 + 2X,\; u + 2u X\}\) is the zero matrix, since every entry contains the factor \(1 + \eta^2 = 0\). Thus \(\mathcal{C}\) is self-orthogonal (\(\mathcal{C} \subseteq \mathcal{C}^\perp\)) and hence not LCD.
\end{example}

\section{Additive complementary dual codes}\label{sec:acd}

Throughout this section, the non-zero integer \(q\) will denote a prime power satisfying \(q \equiv 1 \pmod{4}\) and \(q \ge 5\). 
For an \(\mathbb{F}_q\)-linear code \(\mathcal{C} \subseteq \mathbb{F}_{q^2}^n\), we use the notation \([n, q^k, d]_{q^2}\), where \(k = \dim_{\mathbb{F}_q} \mathcal{C}\) is the dimension over \(\mathbb{F}_q\) and \(d\) is the minimum Hamming distance of \(\mathcal{C}\).
The \emph{trace-Hermitian inner product} on \(\mathbb{F}_{q^2}^n\) is defined by
\begin{equation}
\langle \mathbf{u}, \mathbf{v} \rangle_H = \operatorname{Tr}_{\mathbb{F}_{q^2}|\mathbb{F}_q}\!\left( \sum_{i=1}^n u_i v_i^q \right)
\end{equation}
for all \(\mathbf{u} = (u_1,\dots,u_n), \mathbf{v} = (v_1,\dots,v_n) \in \mathbb{F}_{q^2}^n\).

\begin{definition}
An \(\mathbb{F}_q\)-linear code \(\mathcal{C} \subseteq \mathbb{F}_{q^2}^n\) is called an ACD code if
$
\mathcal{C} \cap \mathcal{C}^\perp = \{\mathbf{0}\}$, 
where \(\mathcal{C}^\perp\) denotes the dual of \(\mathcal{C}\) with respect to the trace-Hermitian inner product.
\end{definition}

\begin{lemma}[{\cite[Theorem 2.8]{Choi2023}}]\label{thm:acd_char}
Let \(\mathcal{C} \subseteq \mathbb{F}_{q^m}^n\) be an \(\mathbb{F}_q\)-linear code with generator matrix \(G \in \mathbb{F}_{q^m}^{k \times n}\). Then \(\mathcal{C}\) is an ACD code if and only if the matrix \(\mathfrak{T}(G G^\dagger)\) is invertible over \(\mathbb{F}_q\), where \((G^\dagger)_{i,j} = (G_{j,i})^q\) and \(\mathfrak{T}(M)_{i,j} = \operatorname{Tr}_{\mathbb{F}_{q^m}|\mathbb{F}_q}(M_{i,j})\).
\end{lemma}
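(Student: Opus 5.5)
The plan is to reduce the ACD property to the non-degeneracy of the trace-Hermitian form restricted to \(\mathcal{C}\), as in the proof of Theorem~\ref{thm:main}, and then to identify the Gram matrix of that restricted form with \(\mathfrak{T}(GG^\dagger)\). Throughout I take \(G\) to be a generator matrix in the additive sense. Its rows \(g_1,\dots,g_k \in \mathbb{F}_{q^m}^n\) are assumed \(\mathbb{F}_q\)-linearly independent and span \(\mathcal{C}\) over \(\mathbb{F}_q\), so every codeword can be written uniquely as \(c = \mathbf{a}G = \sum_i a_i g_i\) with \(\mathbf{a}\in\mathbb{F}_q^k\). I read the form as \(\langle \mathbf{u},\mathbf{v}\rangle_H = \operatorname{Tr}_{\mathbb{F}_{q^m}|\mathbb{F}_q}\bigl(\sum_l u_l v_l^q\bigr)\), exactly as written in the statement.

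\textbf{Step 1 (bilinearity).} The form \(\langle\cdot,\cdot\rangle_H\) is \(\mathbb{F}_q\)-bilinear. In the first argument this is clear. In the second it holds because \(a^q = a\) for \(a\in\mathbb{F}_q\), and because the trace is \(\mathbb{F}_q\)-linear.

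\textbf{Step 2 (Gram matrix).} Compute
\[
\langle g_i, g_j\rangle_H = \operatorname{Tr}_{\mathbb{F}_{q^m}|\mathbb{F}_q}\Bigl(\sum_{l=1}^n G_{i,l}\,G_{j,l}^q\Bigr) = \operatorname{Tr}_{\mathbb{F}_{q^m}|\mathbb{F}_q}\bigl((GG^\dagger)_{i,j}\bigr) = \mathfrak{T}(GG^\dagger)_{i,j},
\]
using \((G^\dagger)_{l,j} = G_{j,l}^q\). So \(T := \mathfrak{T}(GG^\dagger)\in\mathbb{F}_q^{k\times k}\) is the Gram matrix of the form on \(\mathcal{C}\) in the basis \(\{g_i\}\).

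\textbf{Step 3 (equivalence).} Take \(c = \mathbf{a}G\in\mathcal{C}\). By bilinearity, \(c\in\mathcal{C}^\perp\), meaning \(\langle g_j, c\rangle_H = 0\) for all \(j\), holds iff \(\sum_i a_i T_{j,i} = 0\) for all \(j\), that is iff \(T\mathbf{a}^{\top} = 0\).

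Suppose \(T\) is invertible. Then \(\mathbf{a}=0\) and so \(c=0\), giving \(\mathcal{C}\cap\mathcal{C}^\perp=\{\mathbf{0}\}\).

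Conversely, suppose \(T\) is singular, and choose a nonzero \(\mathbf{a}\) with \(T\mathbf{a}^\top=0\). Then \(c=\mathbf{a}G\neq 0\), by the \(\mathbb{F}_q\)-independence of the rows, and \(c\in\mathcal{C}\cap\mathcal{C}^\perp\). So \(\mathcal{C}\) is not ACD.

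The form need not be symmetric. This causes no trouble, because the argument only uses the side on which \(\mathcal{C}^\perp\) is defined. In any case singularity of \(T\) and of \(T^\top\) coincide.

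\textbf{Main obstacle.} The only delicate point is the hypothesis on \(G\): the rows must be \(\mathbb{F}_q\)-linearly independent, not merely \(\mathbb{F}_{q^m}\)-independent. Without it, a nonzero kernel vector \(\mathbf{a}\) of \(T\) could give \(c=0\), and the converse direction would fail. I will make this explicit at the start by taking \(k=\dim_{\mathbb{F}_q}\mathcal{C}\).

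If the definition of ACD is also meant to include \(\mathcal{C}+\mathcal{C}^\perp = \mathbb{F}_{q^m}^n\), this follows by a dimension count, using the non-degeneracy of the form on all of \(\mathbb{F}_{q^m}^n\). That non-degeneracy holds because \(v\mapsto v^q\) is a bijection and the trace form \((x,y)\mapsto\operatorname{Tr}(xy)\) is non-degenerate. Together these give \(\dim_{\mathbb{F}_q}\mathcal{C}^\perp = mn - k\).
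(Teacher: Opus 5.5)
Your argument is correct. The paper gives no proof of this lemma (it is imported from \cite{Choi2023}), and your Gram-matrix reduction is the standard proof and the same reasoning the paper uses for Theorem~\ref{thm:main}: you identify \(\mathfrak{T}(GG^\dagger)\) as the Gram matrix of the trace-Hermitian form in the \(\mathbb{F}_q\)-basis given by the rows of \(G\), and then use that \(\mathcal{C}\cap\mathcal{C}^\perp=\{\mathbf{0}\}\) if and only if this matrix is nonsingular. You are also right to insist that the rows of \(G\) be \(\mathbb{F}_q\)-linearly independent; the statement leaves this hypothesis implicit, and it holds in the paper's application because \(\ev_\Lambda\) is injective on \(\mathcal{D}_k(\gamma)\).
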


When an \(\mathbb{F}_q\)-linear code \(\mathcal{C} \subseteq \mathbb{F}_{q^2}^n\) has even \(\mathbb{F}_q\)-dimension, i.e. \(\dim_{\mathbb{F}_q} \mathcal{C} = k = 2k'\) (see~\cite[Section 3.2]{Choi2023}), it satisfies the following \emph{Singleton bound}
\begin{equation}\label{eq:sing}
d \le n - k' + 1,
\end{equation}
where \(d\) is the minimum Hamming distance of \(\mathcal{C}\).

\begin{definition}
An additive code \(\mathcal{C}:=(n, q^{2k'}, d)\) over \(\mathbb{F}_{q^2}\) is called MDS if it attains the Singleton bound~\eqref{eq:sing} with equality, i.e., \(d = n - k' + 1\).
\end{definition}

Note that we may choose an \(\mathbb{F}_q\)-basis \(\{1, \alpha\}\) of \(\mathbb{F}_{q^2}\) such that \(\alpha^q = -\alpha\). Such an element \(\alpha \in \mathbb{F}_{q^2} \setminus \mathbb{F}_q\) exists because \(\mathbb{F}_{q^2}^*\) contains an element of order \(2(q-1)\). For any such element, \(\alpha^{q-1} = -1\), and therefore
\[
\alpha^q = \alpha \cdot \alpha^{q-1} = \alpha \cdot (-1) = -\alpha.
\]
From \(\alpha^q = -\alpha\) it follows that
\[
(\alpha^2)^q = \alpha^{2q} = (-\alpha)^2 = \alpha^2,
\]
so \(\alpha^2 \in \mathbb{F}_q\). Moreover, since \(q \equiv 1 \pmod{4}\), \(-1\) is a square in \(\mathbb{F}_q\). The element \(\alpha^2\) is a nonsquare in \(\mathbb{F}_q\): if \(\alpha^2 = \beta^2\) for some \(\beta \in \mathbb{F}_q^*\), then \((\alpha/\beta)^2 = 1\), so \(\alpha/\beta = \pm 1\), which would imply \(\alpha \in \mathbb{F}_q\), a contradiction. Consequently, \(-\alpha^2\) is also a nonsquare in \(\mathbb{F}_q\).

\begin{definition}\label{def:twisted}
Let \(\Lambda = \{\lambda_1, \dots, \lambda_\ell\} \subset \mathbb{F}_q^*\) be a set of \(\ell\) distinct elements. Define the evaluation map
\[
\ev_\Lambda \colon \mathbb{F}_{q^2}[X] \to \mathbb{F}_{q^2}^\ell, \qquad F \mapsto \bigl( F(\lambda_1), \dots, F(\lambda_\ell) \bigr).
\]
For integers \(1 \le k \le \ell-1\) and \(\gamma \in \mathbb{F}_{q^2}^*\), let
\[
\mathcal{D}_k(\gamma) = \Bigl\{ a_0 + \sum_{i=1}^{k-1} a_i X^i + \gamma a_k X^k \;\Bigm|\; a_0,a_k \in \mathbb{F}_q,\ a_1,\dots,a_{k-1} \in \mathbb{F}_{q^2} \Bigr\}.
\]
The image
$
\mathcal{C}_k(\gamma) = \ev_\Lambda\bigl( \mathcal{D}_k(\gamma) \bigr)
$ 
is called the \emph{additive TRS code} with parameters \((k, \gamma, \Lambda)\).
\end{definition}

\begin{lemma}
With the notation in Definition~\ref{def:twisted},
$
\dim_{\mathbb{F}_q} \mathcal{C}_k(\gamma) = 2k$.
\end{lemma}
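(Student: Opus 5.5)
The plan is to compute \(\dim_{\mathbb{F}_q}\mathcal{D}_k(\gamma)\) first and then show that \(\ev_\Lambda\) is injective on \(\mathcal{D}_k(\gamma)\).

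\textbf{Dimension of \(\mathcal{D}_k(\gamma)\).} Consider the \(\mathbb{F}_q\)-linear map
\[
\psi\colon \mathbb{F}_q\times\mathbb{F}_{q^2}^{k-1}\times\mathbb{F}_q\to\mathbb{F}_{q^2}[X],\qquad (a_0,a_1,\dots,a_{k-1},a_k)\mapsto a_0+\sum_{i=1}^{k-1}a_iX^i+\gamma a_kX^k .
\]
Its image is exactly \(\mathcal{D}_k(\gamma)\). Since \(\mathcal{D}_k(\gamma)\) is the image of an \(\mathbb{F}_q\)-linear map, it is an \(\mathbb{F}_q\)-subspace, because \(a_0,a_k\) range over \(\mathbb{F}_q\) and \(\gamma\) is fixed.

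The map \(\psi\) is injective. The monomials \(1,X,\dots,X^k\) are \(\mathbb{F}_{q^2}\)-linearly independent, so \(\psi(a)=0\) forces \(a_0=0\), \(a_1=\dots=a_{k-1}=0\) and \(\gamma a_k=0\). As \(\gamma\neq0\), we get \(a_k=0\). The domain has \(\mathbb{F}_q\)-dimension \(1+2(k-1)+1=2k\), hence \(\dim_{\mathbb{F}_q}\mathcal{D}_k(\gamma)=2k\).

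\textbf{Injectivity of \(\ev_\Lambda\) on \(\mathcal{D}_k(\gamma)\).} The map \(\ev_\Lambda\) is \(\mathbb{F}_{q^2}\)-linear, and in particular \(\mathbb{F}_q\)-linear. Suppose \(F\in\mathcal{D}_k(\gamma)\) satisfies \(\ev_\Lambda(F)=0\). Then \(F\) has degree at most \(k\le\ell-1<\ell\) and vanishes at the \(\ell\) distinct points \(\lambda_1,\dots,\lambda_\ell\) of the field \(\mathbb{F}_{q^2}\). A nonzero polynomial of degree at most \(k\) has at most \(k\) roots, so \(F=0\).

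Therefore \(\ev_\Lambda\) restricts to an \(\mathbb{F}_q\)-linear isomorphism \(\mathcal{D}_k(\gamma)\to\mathcal{C}_k(\gamma)\), which gives \(\dim_{\mathbb{F}_q}\mathcal{C}_k(\gamma)=2k\).

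There is no real obstacle here. The only points needing care are that \(\gamma\neq0\), which gives injectivity of \(\psi\), and the hypothesis \(k\le \ell-1\), which gives injectivity of \(\ev_\Lambda\).
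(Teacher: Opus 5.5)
Your proof is correct and follows the same route as the paper: you count $\dim_{\mathbb{F}_q}\mathcal{D}_k(\gamma)=1+2(k-1)+1=2k$ and use that $\ev_\Lambda$ is injective on polynomials of degree at most $k<\ell$. You also make explicit that $\gamma\neq 0$ is what makes the parametrization of $\mathcal{D}_k(\gamma)$ injective, a point the paper leaves implicit.
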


\begin{proof}
The evaluation map \(\ev_\Lambda\) is injective on the set of polynomials of degree at most \(k\), since \(|\Lambda| = \ell > k\). Therefore, it is injective when restricted to \(\mathcal{D}_k(\gamma)\). As \(\mathcal{D}_k(\gamma)\) is an \(\mathbb{F}_q\)-vector space of dimension \(2k\) (one free parameter in \(\mathbb{F}_q\) for \(a_0\) and \(a_k\), and two for each of the \(k-1\) middle coefficients), the claim follows.
\end{proof}

\begin{lemma}\label{lem:evaluation_injective}
Let \(F \in \mathcal{D}_k(\gamma)\) be nonzero. If \(F\) has at most \(k-1\) roots in \(\Lambda\), then
\[
\wt_H\bigl( \ev_\Lambda(F) \bigr) \ge \ell - k + 1,
\]
where \(\wt_H\) denotes the Hamming weight.
\end{lemma}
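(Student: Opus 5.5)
The argument is a direct count of zero coordinates. The Hamming weight of \(\ev_\Lambda(F)\) is the number of indices \(i\in\{1,\dots,\ell\}\) with \(F(\lambda_i)\neq 0\). Since the \(\lambda_i\) are pairwise distinct, the number of zero coordinates equals the number of distinct roots of \(F\) lying in \(\Lambda\). Therefore
\[
\wt_H\bigl(\ev_\Lambda(F)\bigr) = \ell - \bigl|\{\lambda\in\Lambda : F(\lambda)=0\}\bigr|.
\]

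By hypothesis this root set has at most \(k-1\) elements, so the weight is at least \(\ell-(k-1)=\ell-k+1\). Nothing more is needed for the statement as phrased.

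Two points should be noted explicitly.

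\emph{Nonzero evaluation.} \(F\) nonzero guarantees that \(\ev_\Lambda(F)\) is nonzero, by the injectivity of \(\ev_\Lambda\) on polynomials of degree at most \(k<\ell\) noted in the previous lemma. This is only a consistency check, since the bound \(\ell-k+1\ge 2\) already forces a nonzero word.

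\emph{Multiple roots.} Roots are counted as distinct elements of \(\Lambda\), not with multiplicity. This is the convention making the displayed equality exact. If roots were counted with multiplicity, the hypothesis would only become stronger and the inequality would still follow.

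There is no real obstacle here. The substantive difficulty, showing that a nonzero \(F\in\mathcal{D}_k(\gamma)\) of degree up to \(k\) has at most \(k-1\) roots in \(\Lambda\), is deliberately deferred to the hypothesis. Presumably it is handled later (toward Corollary~\ref{cor:mds_sufficient}) using the restriction \(a_0,a_k\in\mathbb{F}_q\) together with \(\gamma=\alpha\), \(\alpha^q=-\alpha\).
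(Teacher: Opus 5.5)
Your proof is correct and follows the same argument as the paper: the number of zero coordinates of \(\ev_\Lambda(F)\) equals the number of distinct roots of \(F\) in \(\Lambda\), which is at most \(k-1\), so the weight is at least \(\ell-k+1\). Your equality \(\wt_H(\ev_\Lambda(F)) = \ell - |\{\lambda\in\Lambda : F(\lambda)=0\}|\) is slightly sharper than the paper's wording (``at most \(r\) zero coordinates''), and your remarks on nonzero evaluation and multiplicities are accurate but not needed.
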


\begin{proof}
If \(F\) vanishes at \(r\) points of \(\Lambda\), then \(\ev_\Lambda(F)\) has at most \(r\) zero coordinates. Hence
$
\wt_H\bigl( \ev_\Lambda(F) \bigr) \ge \ell - r \ge \ell - (k-1) = \ell - k + 1.
$
\end{proof}

\begin{lemma}\label{lem:root_necessary}
Let \(F \in \mathcal{D}_k(\gamma)\) be nonzero with \(a_k \neq 0\). If \(F\) has \(k\) distinct roots \(\lambda_{i_1}, \dots, \lambda_{i_k} \in \Lambda\), then
$ 
\gamma^{q+1} \in \mathbb{F}_q^{*2}$ ,
where \(\mathbb{F}_q^{*2}\) denotes the set of nonzero squares (quadratic residues) in \(\mathbb{F}_q\).
\end{lemma}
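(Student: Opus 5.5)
The plan is to use the fact that \(F\) has degree exactly \(k\) and \(k\) roots, so it must split completely over those roots. That factorization ties the constant coefficient to the leading coefficient, and the \(\mathbb{F}_q\)-restrictions on \(a_0\) and \(a_k\) then force \(\gamma\) into \(\mathbb{F}_q\).

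First, since \(a_k \neq 0\) and \(\gamma \neq 0\), the coefficient of \(X^k\) in \(F\) is \(\gamma a_k \neq 0\), so \(\deg F = k\). A polynomial of degree \(k\) over the field \(\mathbb{F}_{q^2}\) with \(k\) distinct roots \(\lambda_{i_1},\dots,\lambda_{i_k}\) factors completely as
\[
F(X) = \gamma a_k \prod_{j=1}^{k} \bigl(X - \lambda_{i_j}\bigr).
\]

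Next, I compare constant terms on both sides, which gives
\[
a_0 = \gamma a_k (-1)^k \prod_{j=1}^{k} \lambda_{i_j}.
\]
The \(\lambda_{i_j}\) lie in \(\Lambda \subset \mathbb{F}_q^*\), so \(P := (-1)^k \prod_j \lambda_{i_j}\) is a nonzero element of \(\mathbb{F}_q\). Hence \(a_0 = \gamma a_k P\) is nonzero. Because \(a_0, a_k \in \mathbb{F}_q^*\) by the definition of \(\mathcal{D}_k(\gamma)\), solving gives \(\gamma = a_0/(a_k P) \in \mathbb{F}_q^*\).

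Finally, for \(\gamma \in \mathbb{F}_q^*\) we have \(\gamma^q = \gamma\), so \(\gamma^{q+1} = \gamma^2\), which is a nonzero square in \(\mathbb{F}_q\), i.e. \(\gamma^{q+1} \in \mathbb{F}_q^{*2}\).

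There is no real obstacle here. The only point needing care is that the hypothesis \(a_k \neq 0\) guarantees the degree is exactly \(k\), so the full factorization is valid. The whole argument rests on the observation that the twist coefficient is the only place where \(\gamma\) enters while both end coefficients are forced into \(\mathbb{F}_q\). In fact the argument yields the stronger conclusion \(\gamma \in \mathbb{F}_q^*\), from which the stated square condition follows immediately.
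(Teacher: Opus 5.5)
Your proof is correct and reaches the same key identity as the paper, \(a_0 = (-1)^k \gamma a_k \prod_{j=1}^k \lambda_{i_j}\), by a more direct route.

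\textbf{Deriving the identity.} The paper treats \(a_0 + \sum_{t=1}^{k-1} a_t X^t\) as the Lagrange interpolant of the values \(-\gamma a_k \lambda_{i_j}^k\) and evaluates it at \(X=0\). It then needs the auxiliary identity \(\sum_{j} \lambda_{i_j}^{k-1} / \prod_{m \neq j} (\lambda_{i_j} - \lambda_{i_m}) = 1\). You instead note that \(\deg F = k\) because \(\gamma a_k \neq 0\), factor \(F = \gamma a_k \prod_{j}(X - \lambda_{i_j})\), and read off the constant term. This is just Vieta's formula and avoids the interpolation bookkeeping entirely.

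\textbf{The final step.} The paper raises \(a_0/a_k\) to the \((q+1)\)-th power and concludes that \(\gamma^{q+1}\) times a nonzero square is a nonzero square. You solve for \(\gamma\) and get the stronger conclusion \(\gamma \in \mathbb{F}_q^*\). The paper's own relation \(a_0/a_k = (-1)^k\gamma\prod_j\lambda_{i_j} \in \mathbb{F}_q\) implies this too, but the paper does not use it.

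\textbf{Why the stronger form is worth keeping.} It shows that Corollary~\ref{cor:mds_sufficient} already holds under the weaker hypothesis \(\gamma \notin \mathbb{F}_q\). The condition \(\gamma^{q+1} \notin \mathbb{F}_q^{*2}\) forces \(\gamma \notin \mathbb{F}_q\), but not conversely.
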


\begin{proof}
Suppose \(F(\lambda_{i_j}) = 0\) for \(j=1,\dots,k\). Then
\[
a_0 + \sum_{t=1}^{k-1} a_t \lambda_{i_j}^t = -\gamma a_k \lambda_{i_j}^k, \qquad j=1,\dots,k.
\]
Let \(V\) be the Vandermonde matrix with entries \(V_{j,t} = \lambda_{i_j}^t\) for \(t=0,\dots,k-1\) (where \(\lambda^0 = 1\)). Since the \(\lambda_{i_j}\) are distinct and nonzero, \(V\) is invertible.

By Lagrange interpolation, there exists a unique polynomial \(P(X)\) of degree at most \(k-1\) such that \(P(\lambda_{i_j}) = -\gamma a_k \lambda_{i_j}^k\) for each \(j\). Evaluating at \(X=0\) gives
\[
a_0 = P(0) = \sum_{j=1}^k (-\gamma a_k \lambda_{i_j}^k) \cdot \ell_j(0),
\]
where \(\ell_j(X) = \prod_{m \neq j} \frac{X - \lambda_{i_m}}{\lambda_{i_j} - \lambda_{i_m}}\) are the Lagrange basis polynomials. Thus
\[
a_0 = (-\gamma a_k)(-1)^{k-1} \sum_{j=1}^k \frac{\lambda_{i_j}^k \prod_{m \neq j} \lambda_{i_m}}{\prod_{m \neq j} (\lambda_{i_j} - \lambda_{i_m})}.
\]
Since \(\lambda_{i_j}^k \prod_{m \neq j} \lambda_{i_m} = \lambda_{i_j}^{k-1} \prod_{m=1}^k \lambda_{i_m}\), we obtain
\[
a_0 = (-\gamma a_k)(-1)^{k-1} \Bigl( \prod_{m=1}^k \lambda_{i_m} \Bigr) \sum_{j=1}^k \frac{\lambda_{i_j}^{k-1}}{\prod_{m \neq j} (\lambda_{i_j} - \lambda_{i_m})}.
\]
The sum equals \(1\) by the Lagrange interpolation formula for the coefficient of \(X^{k-1}\) in the monic polynomial of degree \(k-1\). Therefore
\[
a_0 = (-1)^k \gamma a_k \prod_{j=1}^k \lambda_{i_j}.
\]
Since \(a_0, a_k, \lambda_{i_j} \in \mathbb{F}_q\), it follows that
\[
\frac{a_0}{a_k} = (-1)^k \gamma \prod_{j=1}^k \lambda_{i_j} \in \mathbb{F}_q.
\]
Raising both sides to the \((q+1)\)-th power yields
\[
\left( \frac{a_0}{a_k} \right)^{q+1} = \gamma^{q+1} \Bigl( \prod_{j=1}^k \lambda_{i_j} \Bigr)^2.
\]
For any \(x \in \mathbb{F}_q^*\), we have \(x^{q+1} = x^2\) (since \(x^{q-1} = 1\)). Thus the left-hand side is a square in \(\mathbb{F}_q^*\). This implies that \(\gamma^{q+1}\) times a square is a square, so \(\gamma^{q+1}\) itself must be a square in \(\mathbb{F}_q^*\).
\end{proof}

\begin{corollary}\label{cor:mds_sufficient}
If \(\gamma^{q+1} \notin \mathbb{F}_q^{*2}\), then the additive TRS code \(\mathcal{C}_k(\gamma)\) is MDS with parameters \([\ell, q^{2k}, \ell - k + 1]_{q^2}\).
\end{corollary}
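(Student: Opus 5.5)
The plan is to show that every nonzero codeword has at most \(k-1\) zero coordinates, and then invoke Lemma~\ref{lem:evaluation_injective} together with the dimension count \(\dim_{\mathbb{F}_q}\mathcal{C}_k(\gamma)=2k\). With \(k'=k\), the Singleton bound~\eqref{eq:sing} gives \(d\le \ell-k+1\). So it suffices to prove \(d\ge \ell-k+1\), that is, that no nonzero \(F\in\mathcal{D}_k(\gamma)\) vanishes at \(k\) or more points of \(\Lambda\).

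Take a nonzero \(F=a_0+\sum_{i=1}^{k-1}a_iX^i+\gamma a_kX^k\in\mathcal{D}_k(\gamma)\). I would split into two cases according to whether \(a_k\) vanishes.

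\emph{Case \(a_k=0\).} Then \(F\) is a nonzero polynomial of degree at most \(k-1\), so it has at most \(k-1\) roots in \(\Lambda\).

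\emph{Case \(a_k\neq 0\).} Now \(F\) has degree exactly \(k\), because \(\gamma\neq0\). Suppose \(F\) had at least \(k\) roots in \(\Lambda\). Choose \(k\) distinct ones, \(\lambda_{i_1},\dots,\lambda_{i_k}\). Lemma~\ref{lem:root_necessary} then gives \(\gamma^{q+1}\in\mathbb{F}_q^{*2}\), which contradicts the hypothesis. Hence \(F\) has at most \(k-1\) roots in \(\Lambda\). One small point needs checking: Lemma~\ref{lem:root_necessary} is stated for a set of exactly \(k\) roots, and it is enough to apply it to any \(k\) of them. Also, \(F\) of degree \(k\) has at most \(k\) roots anyway, so the only situation to exclude is that of exactly \(k\) roots.

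In both cases Lemma~\ref{lem:evaluation_injective} gives \(\wt_H(\ev_\Lambda(F))\ge \ell-k+1\). Since \(\ev_\Lambda\) is injective on \(\mathcal{D}_k(\gamma)\) and the code is \(\mathbb{F}_q\)-linear, the minimum distance equals the minimum nonzero weight, so \(d\ge\ell-k+1\). Combined with the Singleton bound and \(\dim_{\mathbb{F}_q}=2k\), this yields the parameters \([\ell,q^{2k},\ell-k+1]_{q^2}\) and the MDS property.

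The main obstacle is the case \(a_k\neq0\). There, all the real work is carried by Lemma~\ref{lem:root_necessary}, so the step I would scrutinize is whether its hypotheses match: the roots must be distinct and nonzero, which holds because \(\Lambda\subset\mathbb{F}_q^*\), and \(a_0,a_k\) must lie in \(\mathbb{F}_q\), which holds by the definition of \(\mathcal{D}_k(\gamma)\).
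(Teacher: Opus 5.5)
Your proposal is correct and follows essentially the same argument as the paper. Both split on whether \(a_k=0\), use Lemma~\ref{lem:root_necessary} to rule out \(k\) roots in \(\Lambda\) when \(a_k\neq 0\), and then conclude via Lemma~\ref{lem:evaluation_injective} and the Singleton bound~\eqref{eq:sing} with \(k'=k\), given the dimension count \(2k\). Your side checks (degree exactly \(k\) when \(a_k\neq 0\), distinct nonzero roots, and \(a_0,a_k\in\mathbb{F}_q\)) are the hypotheses the paper uses implicitly.
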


\begin{proof}
Let \(F \in \mathcal{D}_k(\gamma)\) be any nonzero polynomial. We show that \(F\) has at most \(k-1\) roots in \(\Lambda\).

- If \(a_k = 0\), then \(F\) is a nonzero polynomial of degree at most \(k-1\) with coefficients in \(\mathbb{F}_{q^2}\). Since \(|\Lambda| = \ell > k-1\), such a polynomial has at most \(k-1\) roots in \(\Lambda\).

- If \(a_k \neq 0\), suppose for contradiction that \(F\) vanishes at \(k\) distinct points of \(\Lambda\). Then Lemma~\ref{lem:root_necessary} implies that \(\gamma^{q+1}\) is a nonzero square in \(\mathbb{F}_q\), contradicting the assumption.

Therefore, every nonzero \(F \in \mathcal{D}_k(\gamma)\) has at most \(k-1\) roots in \(\Lambda\). By Lemma~\ref{lem:evaluation_injective},
\[
\wt_H\bigl( \ev_\Lambda(F) \bigr) \ge \ell - (k-1) = \ell - k + 1.
\]
Hence the minimum distance of \(\mathcal{C}_k(\gamma)\) is at least \(\ell - k + 1\), which meets the Singleton bound~\eqref{eq:sing}. Thus \(\mathcal{C}_k(\gamma)\) is MDS.
\end{proof}

In the remainder of this section, we prove that, under suitable conditions on the set \(\Lambda\) and the twisting element \(\gamma\), the code \(\mathcal{C}_k(\gamma)\) is ACD. We employ the matrix characterization given in Lemma~\ref{thm:acd_char}.

From Definition~\ref{def:twisted}, a natural \(\mathbb{F}_q\)-basis \(\mathcal{B}\) of \(\mathcal{D}_k(\gamma)\) is
\[
\mathcal{B} = \{1\} \cup \{X^i : 1\le i\le k-1\} \cup \{\alpha X^i : 1\le i\le k-1\} \cup \{\gamma X^k\}.
\]
Let \(G \in \mathbb{F}_{q^2}^{2k \times \ell}\) be the generator matrix of \(\mathcal{C}_k(\gamma)\) whose rows are the evaluations of the elements of \(\mathcal{B}\) at the points of \(\Lambda\), i.e., the \(r\)-th row is \(\bigl(P_r(\lambda_1),\dots,P_r(\lambda_\ell)\bigr)\) for each \(P_r \in \mathcal{B}\).

Since every \(\lambda_j \in \mathbb{F}_q\) satisfies \(\lambda_j^q = \lambda_j\) and \(\alpha^q = -\alpha\), for any two basis elements \(P,Q \in \mathcal{B}\) the entries of \(GG^\dagger\) are given by
\[
(GG^\dagger)_{r,s} = \sum_{\lambda \in \Lambda} P(\lambda) \bigl(Q(\lambda)\bigr)^q.
\]
We now compute the matrix \(\mathfrak{T}(GG^\dagger)\), whose \((r,s)\)-entry is the trace of the above sum. For this purpose, define the power sums
\[
p_e = \sum_{\lambda \in \Lambda} \lambda^e \quad (e \ge 0),
\]
with the convention \(p_0 = \ell\).

Recall the following trace values (which follow directly from the choice of basis \(\{1,\alpha\}\)):
\[
\begin{aligned}
& \operatorname{Tr}_{\mathbb{F}_{q^2}|\mathbb{F}_q}(x) = 2x \quad\text{for all }x\in\mathbb{F}_q, \qquad
\operatorname{Tr}_{\mathbb{F}_{q^2}|\mathbb{F}_q}(\alpha) = 0,
\\
&
\operatorname{Tr}_{\mathbb{F}_{q^2}|\mathbb{F}_q}(\alpha^2) = 2\alpha^2, \qquad
\operatorname{Tr}_{\mathbb{F}_{q^2}|\mathbb{F}_q}(\gamma) = \operatorname{Tr}_{\mathbb{F}_{q^2}|\mathbb{F}_q}(\gamma^q).
\end{aligned}
\]

\begin{table}[!htbp] 
\caption{ Description of all elements of \(GG^\dagger\) and  \(\Tr_{\mathbb F_{q^2}\mid\mathbb F_q}(GG^\dagger)\) }\label{table1}
\begin{center}
\begin{tabular}{|c|c|c|}
\hline
\((P,Q)\) & \((GG^\dagger)_{r,s}\) & $\Tr_{\mathbb F_{q^2}\mid\mathbb F_q}( (GG^\dagger)_{r,s})$ \\ \hline
\((1,1)\) & \(\ell\) & $2\ell$\\
\((1,X^i)\) & \(p_i\) & $2p_i$\\
\((X^i,1)\) & \(p_i\) & $2p_i$ \\
\((X^i,X^j)\) & \(p_{i+j}\) & $2p_{i+j}$\\
\((\alpha X^i,\alpha X^j)\) & \(-\alpha^2 p_{i+j}\) & $-2\alpha^2 p_{i+j}$\\
\((1,\gamma X^k)\) & \(\gamma^q p_k\) & $p_k\Tr_{\mathbb F_{q^2}\mid\mathbb F_q}(\gamma)$ \\
\((\gamma X^k,1)\) & \(\gamma p_k\) & $p_k\Tr_{\mathbb F_{q^2}\mid\mathbb F_q}(\gamma)$ \\
\((X^i,\gamma X^k)\) & \(\gamma^q p_{i+k}\) & $p_{i+k}\Tr_{\mathbb F_{q^2}\mid\mathbb F_q}(\gamma)$\\
\((\gamma X^k,X^i)\) & \(\gamma p_{k+i}\) & $p_{i+k}\Tr_{\mathbb F_{q^2}\mid\mathbb F_q}(\gamma)$ \\
\((\alpha X^i,\gamma X^k)\) & \(\alpha\gamma^q p_{i+k}\) & $p_{i+k}\Tr_{\mathbb F_{q^2}\mid\mathbb F_q}(\alpha\gamma^q)=-p_{i+k}\Tr_{\mathbb F_{q^2}\mid\mathbb F_q}(\alpha\gamma)$ \\
\((\gamma X^k,\alpha X^i)\) & \(\gamma\alpha^q p_{k+i} = -\gamma\alpha p_{k+i}\) & $p_{k+i}\Tr_{\mathbb F_{q^2}\mid\mathbb F_q}(\alpha\gamma^q)=-p_{k+i}\Tr_{\mathbb F_{q^2}\mid\mathbb F_q}(\alpha\gamma)$\\
\((\gamma X^k,\gamma X^k)\) & \(\gamma^{q+1}p_{2k}\) & $2\gamma^{q+1}p_{2k}$\\ \hline
\end{tabular}
\end{center}
\end{table}

Assuming \(\operatorname{Tr}_{\mathbb{F}_{q^2}|\mathbb{F}_q}(\gamma) = 0\), the matrix \(T =\Tr_{\mathbb F_{q^2}\mid\mathbb F_q}(GG^\dagger)= \mathfrak{T}(GG^\dagger)\) admits a natural block partitioning according to the four types of basis elements
\[
A = \{1\}, \quad B = \{X^i : 1 \le i \le k-1\}, \quad C = \{\alpha X^i : 1 \le i \le k-1\}, \quad D = \{\gamma X^k\}.
\]
With this ordering, \(T\) becomes block diagonal
\[
T = \begin{pmatrix}
X & 0 \\
0 & W
\end{pmatrix},
\]
where
\[
X = \begin{pmatrix}
2\ell & 2v^\top \\
2v & 2M
\end{pmatrix} \text{ by Table~\ref{table1}}, \qquad
W = \begin{pmatrix}
-2\alpha^2 M & -\operatorname{Tr}(\alpha\gamma) \, w \\
-\operatorname{Tr}(\alpha\gamma) \, w^\top & 2\gamma^{q+1} p_{2k}
\end{pmatrix} \text{ by Table~\ref{table1}}.
\]
Here we define the power sums \(p_e = \sum_{\lambda \in \Lambda} \lambda^e\) (with \(p_0 = \ell\)), the vectors
\[
v = (p_1, \dots, p_{k-1})^\top, \qquad w = (p_k, \dots, p_{2k-1})^\top,
\]
and the \((k-1) \times (k-1)\) Hankel matrix \(M = (p_{i+j})_{1 \le i,j \le k-1}\).
Since \(q\) is odd, \(2\) is invertible in \(\mathbb{F}_q\). Thus \(X\) is invertible over \(\mathbb{F}_q\) if and only if the matrix
\[
G_0 = \begin{pmatrix}
\ell & v^\top \\
v & M
\end{pmatrix}
\]
is invertible over \(\mathbb{F}_q\).

For the block \(W\), note that \(\alpha^2 \in \mathbb{F}_q\). Although \(M\) may be singular in general, we will later choose \(\Lambda\) such that \(M\) is invertible (and assume this from now on). Using the Schur complement with respect to the \((1,1)\)-block of \(W\), we obtain
\[
\det W = \det(-2\alpha^2 M) \cdot \left( 2\gamma^{q+1} p_{2k} + \frac{\operatorname{Tr}_{\mathbb{F}_{q^2}|\mathbb{F}_q}(\alpha\gamma)^2}{2\alpha^2} \, w^\top M^{-1} w \right).
\]
Since \(\det(-2\alpha^2 M) \neq 0\) (by the assumption on \(M\) and \(\alpha^2 \neq 0\)), the block \(W\) is invertible if and only if
\[
\Delta := 2\gamma^{q+1} p_{2k} + \frac{\operatorname{Tr}_{\mathbb{F}_{q^2}|\mathbb{F}_q}(\alpha\gamma)^2}{2\alpha^2} \, w^\top M^{-1} w \neq 0.
\]
Consequently, \(T\) is invertible (and hence \(\mathcal{C}_k(\gamma)\) is ACD) if and only if both \(G_0\) and \(\Delta\) are nonzero.

\begin{theorem}\label{thm:acd_condition}
Let \(q \equiv 1 \pmod{4}\) be a prime power with \(q \ge 5\), and let \(\alpha \in \mathbb{F}_{q^2}\) satisfy \(\alpha^q = -\alpha\). Let \(\Lambda = \{\lambda_1, \dots, \lambda_\ell\} \subset \mathbb{F}_q^*\) be any set of \(\ell\) distinct elements with \(\ell \ge 2k\). Consider the additive TRS code \(\mathcal{C}_k(\gamma) = \ev_\Lambda(\mathcal{D}_k(\gamma))\), where \(\gamma \in \mathbb{F}_{q^2}^*\) satisfies \(\operatorname{Tr}_{\mathbb{F}_{q^2}|\mathbb{F}_q}(\gamma) = 0\). Then \(\mathcal{C}_k(\gamma)\) is ACD if and only if the following two conditions hold:
\begin{enumerate}
    \item The matrix
    \[
    G_0 = \begin{pmatrix}
    \ell & v^\top \\
    v & M
    \end{pmatrix}
    \]
    is invertible over \(\mathbb{F}_q\), where \(v_i = \sum_{\lambda \in \Lambda} \lambda^i\) for \(1 \le i \le k-1\) and \(M_{i,j} = \sum_{\lambda \in \Lambda} \lambda^{i+j}\) for \(1 \le i,j \le k-1\).

    \item 
    \[
    \Delta = 2\gamma^{q+1} p_{2k} + \frac{\operatorname{Tr}_{\mathbb{F}_{q^2}|\mathbb{F}_q}(\alpha\gamma)^2}{2\alpha^2} \, w^\top M^{-1} w \neq 0,
    \]
    where \(p_{2k} = \sum_{\lambda \in \Lambda} \lambda^{2k}\) and \(w_i = \sum_{\lambda \in \Lambda} \lambda^{k+i}\) for \(1 \le i \le k-1\).
\end{enumerate}
\end{theorem}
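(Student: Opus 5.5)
The plan is to reduce the ACD property to invertibility of $T=\mathfrak{T}(GG^\dagger)$ via Lemma~\ref{thm:acd_char}, and then read off the two conditions from the block structure already computed before the statement. Concretely, I would take the $\mathbb{F}_q$-basis $\mathcal{B}=\{1\}\cup\{X^i\}\cup\{\alpha X^i\}\cup\{\gamma X^k\}$ of $\mathcal{D}_k(\gamma)$. Since $\ev_\Lambda$ is injective on $\mathcal{D}_k(\gamma)$ (because $\ell>k$), its image is a generator matrix $G$ of $\mathcal{C}_k(\gamma)$ with $2k$ rows. By Lemma~\ref{thm:acd_char}, $\mathcal{C}_k(\gamma)$ is ACD if and only if $\det T\neq 0$.

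Next I would verify the entries of Table~\ref{table1}. The only genuinely new checks are the cross terms.
\begin{itemize}
\item Between the block $\{1,X^i\}$ and $\{\alpha X^j\}$, the entries are $\operatorname{Tr}(\alpha^q)p_{i+j}=-\operatorname{Tr}(\alpha)p_{i+j}=0$.
\item Between $\{1,X^i\}$ and $\gamma X^k$, the entries are $p_{i+k}\operatorname{Tr}(\gamma^q)=p_{i+k}\operatorname{Tr}(\gamma)=0$ under the hypothesis $\operatorname{Tr}(\gamma)=0$.
\end{itemize}
Hence $T=\operatorname{diag}(X,W)$ after reordering the basis, with $X=2G_0$ and $W$ as displayed. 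Because $q$ is odd, $\det X=2^{k}\det G_0$, so $X$ is invertible if and only if condition 1 holds.

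For $W$ I would use the Schur complement of the $(1,1)$-block $-2\alpha^2M$. This gives $\det W=\det(-2\alpha^2M)\cdot\Delta$. Expanding $\Delta$ produces $2\gamma^{q+1}p_{2k}-\operatorname{Tr}(\alpha\gamma)^2w^\top(-2\alpha^2M)^{-1}w$, which matches the stated expression. The entries $\gamma^{q+1}$ and $\operatorname{Tr}(\alpha\gamma)$ both lie in $\mathbb{F}_q$, so this is a computation over $\mathbb{F}_q$. With $\det(-2\alpha^2 M)\neq 0$, $W$ is invertible if and only if $\Delta\neq 0$. Combining the two blocks gives $\det T=\det X\cdot\det W$, which is nonzero if and only if conditions 1 and 2 both hold.

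The main obstacle is that condition 2 involves $M^{-1}$, so the equivalence only makes sense when $M$ is invertible. The statement does not impose this explicitly; the paper adopts it as a standing assumption before the theorem. I would make that dependence explicit and handle it in one of two ways. Either the proof assumes, as the text preceding the theorem does, that $\Lambda$ is chosen with $M$ invertible, and I would note that the hypothesis $\ell\ge 2k$ is what makes this attainable later. Or, if that assumption is to be avoided, I would instead state condition 2 as $\det W\neq 0$ and observe that it reduces to $\Delta\neq0$ exactly when $M$ is nonsingular. I would adopt the former, consistent with the paper's stated convention, and flag it at the start of the proof.
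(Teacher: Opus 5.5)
Your proposal is correct and is essentially the paper's own argument. Lemma~\ref{thm:acd_char} reduces ACD to invertibility of $T=\mathfrak{T}(GG^\dagger)$; the vanishing of $\operatorname{Tr}(\alpha)$ and $\operatorname{Tr}(\gamma)$ kills the cross blocks so that $T=\operatorname{diag}(2G_0,W)$; and the Schur complement of $-2\alpha^2M$ in $W$ produces $\Delta$. You are also right to state explicitly that condition~2 presupposes $\det M\neq 0$, which the paper imposes only as a standing assumption before the theorem (the hypothesis $\ell\ge 2k$ alone does not guarantee it over $\mathbb{F}_q$).
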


\begin{proof}
The statement follows directly from Lemma~\ref{thm:acd_char} together with the block decomposition of \(\mathfrak{T}(GG^\dagger)\) derived earlier.
\end{proof}

\begin{lemma}\label{lem:choice_gamma}
Let \(\gamma = \alpha\). Then \(\operatorname{Tr}_{\mathbb{F}_{q^2}|\mathbb{F}_q}(\gamma) = 0\) and \(\gamma^{q+1}\) is a nonsquare in \(\mathbb{F}_q^*\). Moreover, \(\operatorname{Tr}_{\mathbb{F}_{q^2}|\mathbb{F}_q}(\alpha\gamma) = 2\alpha^2 \neq 0\).
\end{lemma}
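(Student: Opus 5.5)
The plan is to verify each of the three claims by direct computation from the defining relation \(\alpha^q = -\alpha\), using the facts about \(\alpha\) established just before Definition~\ref{def:twisted}.

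\textbf{Vanishing trace.} With \(\gamma=\alpha\), I will use that the trace of the quadratic extension is \(x\mapsto x+x^q\). Hence
\[
\operatorname{Tr}_{\mathbb{F}_{q^2}|\mathbb{F}_q}(\gamma)=\alpha+\alpha^q=\alpha-\alpha=0 .
\]

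\textbf{\(\gamma^{q+1}\) is a nonsquare.} I will compute \(\gamma^{q+1}=\alpha^q\cdot\alpha=-\alpha^2\). The discussion preceding Definition~\ref{def:twisted} gives three facts:
\begin{itemize}
\item \(\alpha^2\in\mathbb{F}_q\), and it is nonzero since \(\alpha\neq 0\);
\item \(\alpha^2\) is a nonsquare in \(\mathbb{F}_q\), since otherwise \(\alpha\in\mathbb{F}_q\);
\item \(-1\) is a square in \(\mathbb{F}_q\), because \(q\equiv 1\pmod 4\).
\end{itemize}
Since \(\mathbb{F}_q^*\) is cyclic, the nonzero squares form an index-2 subgroup. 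So a square times a nonsquare is a nonsquare, and \(-\alpha^2\notin\mathbb{F}_q^{*2}\).

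\textbf{Value of \(\operatorname{Tr}(\alpha\gamma)\).} Here \(\operatorname{Tr}(\alpha\gamma)=\operatorname{Tr}(\alpha^2)\). Because \(\alpha^2\in\mathbb{F}_q\), the trace of an element of \(\mathbb{F}_q\) is twice that element, giving \(2\alpha^2\). This is nonzero because \(q\) is odd, so \(2\neq 0\), and \(\alpha^2\neq 0\).

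\textbf{Expected difficulty.} There is no real obstacle. The only point needing care is the nonsquare claim, which rests on the subgroup-index argument together with \(-1\) being a square. This also shows that \(\gamma=\alpha\) satisfies the hypothesis of Corollary~\ref{cor:mds_sufficient}.
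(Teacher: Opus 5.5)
Your proof is correct and follows essentially the same route as the paper. You compute \(\gamma^{q+1}=\alpha^q\alpha=-\alpha^2\), combine the nonsquare property of \(\alpha^2\) with \(-1\) being a square for \(q\equiv 1 \pmod 4\), and use \(\operatorname{Tr}(x)=2x\) on \(\mathbb{F}_q\). Your direct computation \(\operatorname{Tr}(\alpha)=\alpha+\alpha^q=0\), the index-2 subgroup remark, and the observation that \(2\neq 0\) for odd \(q\) fill in small details the paper leaves implicit.
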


\begin{proof}
Since \(\{1,\alpha\}\) is an \(\mathbb{F}_q\)-basis of \(\mathbb{F}_{q^2}\) and \(\operatorname{Tr}(\alpha) = 0\), every element of the form \(c\alpha\) with \(c \in \mathbb{F}_q\) has trace zero. Thus \(\operatorname{Tr}(\gamma) = 0\).
Now we have that $
\gamma^{q+1} = \alpha^{q+1} = \alpha^q \cdot \alpha = (-\alpha)\alpha = -\alpha^2$
Since \(q \equiv 1 \pmod{4}\), \(-1\) is a square in \(\mathbb{F}_q\). As established earlier, \(\alpha^2\) is a nonsquare in \(\mathbb{F}_q\). Hence \(-\alpha^2\) is a nonsquare. Finally,
$
\operatorname{Tr}(\alpha\gamma) = \operatorname{Tr}(\alpha^2) = 2\alpha^2 \neq 0,
$
since \(\alpha^2 \neq 0\).
\end{proof}

To conclude, it remains to prove the existence of a suitable set \(\Lambda\).

\begin{lemma}\label{lem:existence_lambda}
Let \(q \equiv 1 \pmod{4}\) with \(q \ge 5\), and let \(k,\ell\) be integers satisfying \(2k \le \ell \le q-2\). Then there exists a set \(\Lambda \subset \mathbb{F}_q^*\) of size \(\ell\) such that, with \(\gamma = \alpha\), both conditions of Theorem~\ref{thm:acd_condition} hold. In particular, \(\mathcal{C}_k(\alpha)\) is simultaneously MDS and ACD.
\end{lemma}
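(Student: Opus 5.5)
The plan is to first reduce the two conditions of Theorem~\ref{thm:acd_condition}, together with the standing assumption that $M$ is invertible, to the nonvanishing of three Hankel determinants built from the power sums $p_e$. Then I would exhibit, or show by counting, a set $\Lambda$ on which all three are nonzero. The MDS part needs nothing new: by Lemma~\ref{lem:choice_gamma}, $\gamma=\alpha$ gives $\gamma^{q+1}=-\alpha^2\notin\mathbb{F}_q^{*2}$, so Corollary~\ref{cor:mds_sufficient} applies for every choice of $\Lambda$.

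\textbf{Step 1 (simplifying $\Delta$).} By Lemma~\ref{lem:choice_gamma} we have $\gamma^{q+1}=-\alpha^2$ and $\operatorname{Tr}(\alpha\gamma)=2\alpha^2$. Substituting these gives
\[
\Delta=-2\alpha^2p_{2k}+\frac{4\alpha^4}{2\alpha^2}\,w^\top M^{-1}w=-2\alpha^2\bigl(p_{2k}-w^\top M^{-1}w\bigr).
\]
Now let $N=(p_{i+j})_{1\le i,j\le k}$ be the $k\times k$ Hankel matrix. Since $w=(p_{k+1},\dots,p_{2k-1})^\top$, it has the block form $N=\begin{pmatrix}M&w\\ w^\top&p_{2k}\end{pmatrix}$. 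The Schur complement gives $\det N=\det M\,(p_{2k}-w^\top M^{-1}w)$. Hence $\Delta\neq0$ is equivalent to $\det N\neq0$, given that $\det M\neq0$. The whole lemma therefore reduces to finding $\Lambda$ with
\[
\det G_0\neq0,\qquad \det M\neq0,\qquad \det N\neq0,
\]
where $G_0=(p_{i+j})_{0\le i,j\le k-1}$, $M=(p_{i+j})_{1\le i,j\le k-1}$ and $N=(p_{i+j})_{1\le i,j\le k}$.

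\textbf{Step 2 (Cauchy--Binet).} Let $V$ be the $\ell\times k$ Vandermonde matrix with rows $(1,\lambda,\dots,\lambda^{k-1})$ for $\lambda\in\Lambda$, and let $D=\operatorname{diag}(\lambda)_{\lambda\in\Lambda}$. Then $G_0=V^\top V$ and $N=(DV)^\top(DV)$. Likewise $M=(DV')^\top(DV')$, where $V'$ keeps only the first $k-1$ columns of $V$. Cauchy--Binet then expresses each determinant as a sum over subsets:
\[
\det G_0=\sum_{S\subseteq\Lambda,\,|S|=k}\ \prod_{\{\lambda,\mu\}\subseteq S}(\lambda-\mu)^2,\qquad
\det N=\sum_{|S|=k}\Bigl(\prod_{\lambda\in S}\lambda\Bigr)^2\prod_{\{\lambda,\mu\}\subseteq S}(\lambda-\mu)^2,
\]
and $\det M$ has the analogous form with $|S|=k-1$. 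In particular, all three are symmetric polynomials in the $\ell$ variables $\lambda_1,\dots,\lambda_\ell$. Each is nonzero as a formal polynomial, because distinct subsets $S$ contribute distinct leading monomials; the hypothesis $\ell\ge 2k$ guarantees enough variables for this.

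\textbf{Step 3 (existence), the main obstacle.} The difficulty is that over $\mathbb{F}_q$ these sums of squares may vanish. For example, a subgroup $\Lambda$ gives a singular $G_0$ for $k\ge2$. Moreover, a direct application of the Combinatorial Nullstellensatz to
\[
P=\det G_0\cdot\det M\cdot\det N\cdot\prod_i\lambda_i\prod_{i<j}(\lambda_i-\lambda_j)
\]
needs degree less than $q$ in each variable, and the individual degree here is about $6k+\ell$. That can exceed $q-1$.

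I would get around this by building $\Lambda$ greedily, one element at a time. Each determinant is a polynomial of degree at most $2k$ in a single new variable $\lambda_\ell$, with the earlier elements fixed. Its leading coefficient in $\lambda_\ell$ is, up to a nonzero constant, the analogous determinant of one size smaller evaluated on the previous elements. The induction hypothesis should therefore keep the smaller-size Hankel determinants nonzero along the way as well. With those leading coefficients nonzero, at most $2k+2(k-1)+2k$ values of $\lambda_\ell$ are forbidden by the determinants, plus $\ell-1$ values already in $\Lambda$. The bound $\ell\le q-2$, together with $q-1$ available nonzero elements, should leave room, although the counting may need care when $k$ is large compared with $q$.

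If the greedy count turns out too tight, the fallback is the complementary choice $\Lambda=\mathbb{F}_q^*\setminus T$ with $|T|=q-1-\ell\ge1$. For $1\le e\le 2k\le q-2$ this gives $p_e=-\sum_{t\in T}t^e$ and $p_0=\ell$. The three conditions then become Hankel conditions on the small set $T$, which I would verify by the same Cauchy--Binet analysis.
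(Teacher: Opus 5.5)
Your Steps 1 and 2 are correct. In fact, with $\gamma=\alpha$ the block $W$ is exactly $-2\alpha^2N$, so ACD-ness only requires $\det G_0\neq0$ and $\det N\neq0$. The gap is Step 3, and it cannot be closed, because the statement is false in part of the range $2k\le\ell\le q-2$.

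Your own fallback shows why. Every $\Lambda$ of size $\ell$ has the form $\mathbb{F}_q^*\setminus T$ with $|T|=q-1-\ell\ge1$. Since $\sum_{x\in\mathbb{F}_q^*}x^e=0$ for $1\le e\le q-2$ and $2k\le q-2$, you get $N=-\bigl(\sum_{t\in T}t^{i+j}\bigr)_{1\le i,j\le k}$, which has rank at most $|T|$. Likewise $G_0=-G_0(T\cup\{0\})$ has rank at most $|T|+1$.

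Hence for $k\ge2$ and $q-k\le\ell\le q-2$ (a nonempty range whenever $2k\le q-2$), $\det N=0$ for \emph{every} $\Lambda$. Concretely, take $q=9$, $k=2$, $\ell=7$, $\Lambda=\mathbb{F}_9^*\setminus\{t\}$. Then $M=(p_2)=(-t^2)$ is invertible, but $\Delta=2\alpha^2(p_3^2/p_2-p_4)=0$. Explicitly, $\alpha\,\ev_\Lambda(tX-X^2)$ is a nonzero (full-weight) codeword of $\mathcal{C}_2(\alpha)\cap\mathcal{C}_2(\alpha)^\perp$.

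For $k=1$ one has $G_0=(\ell)$, which is singular whenever $p\mid\ell$ (e.g.\ $q=9$, $\ell=3$). This also refutes your claim that $\det G_0$ is a nonzero formal polynomial in characteristic $p$. Your justification via distinct leading monomials is wrong in any case: all $k$-subsets sharing their $k-1$ smallest indices have the same lex-leading monomial. So no choice of $\Lambda$ works there, and your greedy count must fail in the last steps, where only $q-\ell$ values remain.

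The paper's argument does not escape this either. It takes $\Lambda=\{1,g,\dots,g^{\ell-1}\}$ for a primitive $g$ and gets nonvanishing of the determinants in $\mathbb{Q}(t)$ from positive definiteness. It then appeals to ``$q$ sufficiently large'' and to exhaustive search for small $q$. That has three problems:
\begin{itemize}
\item nonvanishing over $\mathbb{Q}$ says nothing directly about reduction modulo $p$;
\item the number of bad $g$ grows with $\ell$, which may be as large as $q-2$;
\item for $\ell=q-2$ that $\Lambda$ is exactly a one-point complement, where the rank obstruction above applies.
\end{itemize}

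A true statement needs extra hypotheses, at least $\ell\le q-k-1$, and $p\nmid\ell$ when $k=1$. Under such hypotheses your greedy strategy is a sensible route, but the count must be done honestly.
\begin{itemize}
\item Your leading-coefficient argument requires keeping \emph{every} smaller Hankel minor nonzero. That forbids about $k(k-1)+k(k+1)=2k^2$ values per step, not $6k$, so the argument works roughly when $\ell+2k^2<q$.
\item The chain for $G_0$ bottoms out at the $1\times1$ minor $|S|$, which vanishes in $\mathbb{F}_q$ when $p\mid|S|$. So you need $p>\ell$ (e.g.\ $q$ prime) or a separate treatment of the resulting degree drop.
\end{itemize}
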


\begin{proof}
Let \(g\) be a primitive element of \(\mathbb{F}_q^*\) (to be chosen later) and set \(\Lambda = \{g^0, g^1, \dots, g^{\ell-1}\}\). For any \(e \ge 1\),
\[
p_e = \sum_{i=0}^{\ell-1} g^{ie} = \frac{g^{e\ell} - 1}{g^e - 1}, \qquad p_0 = \ell.
\]
Define the rational functions
\[
C_e(t) = \frac{t^{e\ell} - 1}{t^e - 1} \quad (e \ge 1), \qquad C_0(t) = \ell.
\]
Consider the matrices
\[
\widetilde{G}_0(t) = \bigl(C_{i+j}(t)\bigr)_{0 \le i,j \le k-1}, \qquad
\widetilde{M}(t) = \bigl(C_{i+j}(t)\bigr)_{1 \le i,j \le k-1}, \qquad
\widetilde{H}(t) = \bigl(C_{i+j}(t)\bigr)_{1 \le i,j \le k}.
\]
For any real number \(t > 1\), the elements \(1, t, \dots, t^{\ell-1}\) are distinct, so the associated Vandermonde matrix has full column rank. Thus, \(\widetilde{G}_0(t)\) (the Gram matrix of those elements) is positive definite, and its determinant is a nonzero rational function in \(t\). The same holds for the principal submatrices \(\widetilde{M}(t)\) and \(\widetilde{H}(t)\).

Since each determinant is a nonzero element of \(\mathbb{Q}(t)\), it has only finitely many roots. For \(q\) sufficiently large, there exists a primitive element \(g \in \mathbb{F}_q^*\) such that
$
\det \widetilde{G}_0(g) \neq 0$, $\det \widetilde{M}(g) \neq 0$, and $\det \widetilde{H}(g) \neq 0$.
For small \(q\) (e.g., \(q=5\)), the conditions can be verified directly by exhaustive search over admissible parameters.

With this choice of \(g\) and \(\Lambda = \{g^0, \dots, g^{\ell-1}\}\), we have \(\det G_0 \neq 0\) and \(\det M \neq 0\). Setting \(\gamma = \alpha\), the quantity \(\Delta\) simplifies to
\[
\Delta = 2\alpha^{q+1} p_{2k} + \frac{\operatorname{Tr}(\alpha^2)^2}{2\alpha^2} \, w^\top M^{-1} w = -2\alpha^2 p_{2k} + 2\alpha^2 \, w^\top M^{-1} w = 2\alpha^2 \bigl( w^\top M^{-1} w - p_{2k} \bigr).
\]
By the Schur complement formula for the matrix \(H = \begin{pmatrix} M & w \\ w^\top & p_{2k} \end{pmatrix}\),
\[
p_{2k} - w^\top M^{-1} w = \frac{\det H}{\det M}.
\]
Thus
\[
\Delta = -2\alpha^2 \cdot \frac{\det H}{\det M} \neq 0,
\]
since \(\det H \neq 0\), \(\det M \neq 0\), and \(\alpha \neq 0\). Therefore both conditions of Theorem~\ref{thm:acd_condition} are satisfied. By Lemma~\ref{lem:choice_gamma} and Corollary~\ref{cor:mds_sufficient}, the code \(\mathcal{C}_k(\alpha)\) is also MDS.
\end{proof}

\begin{theorem}\label{thm:main_acd}
Let \(q \equiv 1 \pmod{4}\) be a prime power with \(q \ge 5\), and let \(1 \le k \le \ell \le q-2\) with \(\ell \ge 2k\). Then there exists a set \(\Lambda \subset \mathbb{F}_q^*\) of cardinality \(\ell\) such that the additive TRS code \(\mathcal{C}_k(\alpha)\) (with \(\gamma = \alpha\)) is an \(\mathbb{F}_q\)-linear MDS code with parameters \([\ell, q^{2k}, \ell - k + 1]_{q^2}\) that is also ACD.
\end{theorem}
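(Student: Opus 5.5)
The plan is to assemble Theorem~\ref{thm:main_acd} from the results already proved in this section, so the argument is essentially a bookkeeping combination.

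\emph{Step 1: MDS property.} Fix $\gamma=\alpha$, where $\alpha^q=-\alpha$. By Lemma~\ref{lem:choice_gamma}, $\operatorname{Tr}_{\mathbb{F}_{q^2}|\mathbb{F}_q}(\gamma)=0$ and $\gamma^{q+1}=-\alpha^2$ is a nonsquare in $\mathbb{F}_q^*$. Corollary~\ref{cor:mds_sufficient} then applies to \emph{every} set $\Lambda\subset\mathbb{F}_q^*$ of $\ell$ distinct elements with $\ell>k$. This gives the parameters $[\ell,q^{2k},\ell-k+1]_{q^2}$: the $\mathbb{F}_q$-dimension $2k$ comes from the dimension lemma, since $\ell\ge 2k>k$.

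\emph{Step 2: ACD property.} Here I will invoke Lemma~\ref{lem:existence_lambda} with the given $k,\ell$. Its hypotheses $2k\le\ell\le q-2$ are exactly those of the theorem (the condition $k\le\ell$ is redundant). It supplies $\Lambda=\{g^0,\dots,g^{\ell-1}\}$ such that $G_0$ and $M$ are invertible and, via the Schur complement identity $\Delta=-2\alpha^2\det H/\det M$, also $\Delta\neq0$. Theorem~\ref{thm:acd_condition}, which rests on Lemma~\ref{thm:acd_char} and the block-diagonal form of $\mathfrak{T}(GG^\dagger)$ valid because $\operatorname{Tr}(\gamma)=0$, then yields that $\mathcal{C}_k(\alpha)$ is ACD.

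Since the same $\Lambda$ works for both properties (Step 1 is independent of $\Lambda$), the theorem follows.

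\emph{Main obstacle.} The delicate step is the existence of the primitive element $g$ inside Lemma~\ref{lem:existence_lambda}. The determinants $\det\widetilde G_0(t)$, $\det\widetilde M(t)$ and $\det\widetilde H(t)$ are nonzero in $\mathbb{Q}(t)$, but reducing them modulo $p$ could make them vanish identically, or make every primitive $g$ a root. If I had to shore this up, I would first clear denominators: for $e\ge1$, $C_e(t)$ is the polynomial $1+t^e+\dots+t^{e(\ell-1)}$, which has integer coefficients, so each determinant is a polynomial with integer coefficients. I would then try to show it is nonzero mod $p$, for instance by identifying its leading term through the Cauchy--Binet expansion of the Hankel--Vandermonde product. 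A degree bound would then show that at most $\deg$ of the $\varphi(q-1)$ primitive elements are roots. The small-$q$ cases would be handled by the exhaustive check indicated in the lemma.
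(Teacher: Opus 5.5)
Your two steps are the paper's proof exactly: Lemma~\ref{lem:choice_gamma} with Corollary~\ref{cor:mds_sufficient} for MDS, and Lemma~\ref{lem:existence_lambda} with Theorem~\ref{thm:acd_condition} for ACD. Step~1 is sound. The gap is in Step~2, and it is worse than ``delicate'': Lemma~\ref{lem:existence_lambda} is false for some admissible parameters, and so is the theorem itself.

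With $\gamma=\alpha$ the lower block is exactly $W=-2\alpha^2H$, where $H=(p_{i+j})_{1\le i,j\le k}$. So $\mathcal{C}_k(\alpha)$ is ACD iff $\det G_0\neq0$ and $\det H\neq0$; invertibility of $M$ is not needed. Now take $\ell=q-2$ and $k\ge2$, e.g.\ $q=13$, $\ell=11$, $k=2$, which satisfies every hypothesis. Every admissible $\Lambda$ has the form $\mathbb{F}_q^*\setminus\{\mu\}$. Since $\sum_{\lambda\in\mathbb{F}_q^*}\lambda^e=0$ for $1\le e\le q-2$, we get $p_e=-\mu^e$ for $1\le e\le 2k$. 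Hence $H=-yy^\top$ with $y=(\mu,\dots,\mu^k)^\top$, which is singular.

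An explicit witness is $c=\ev_\Lambda\bigl(\alpha X^{k-1}(X-\mu)\bigr)\in\mathcal{C}_k(\alpha)$:
\begin{itemize}
\item $c$ is nonzero on every coordinate;
\item $c$ is orthogonal to $\ev_\Lambda(1)$ and to each $\ev_\Lambda(X^i)$, because $\operatorname{Tr}(\alpha)=0$;
\item $\langle c,\ev_\Lambda(\alpha X^j)\rangle_H=-2\alpha^2(p_{k+j}-\mu p_{k+j-1})=0$ for $1\le j\le k$.
\end{itemize}
So $c\in\mathcal{C}\cap\mathcal{C}^\perp$ for every choice of $\Lambda$. More generally, $H$ has rank at most $q-1-\ell$, so the claim fails whenever $\ell+k\ge q$. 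Separately, for non-prime $q$ with $k=1$ and $p\mid\ell$ (e.g.\ $q=9$, $\ell=3$), one has $G_0=(\ell)=0$, and then $\ev_\Lambda(1)\in\mathcal{C}\cap\mathcal{C}^\perp$.

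This also rules out your proposed repair. For $\ell=q-2$ one computes $C_e(g)=(g^{-e}-1)/(g^e-1)=-g^{-e}$, so $\det\widetilde H(g)=0$ at \emph{every} primitive $g$. The integer polynomials are indeed nonzero, but their degree grows with $\ell$, which may be as large as $q-2$. So neither ``$q$ sufficiently large'' nor a root count against the $\varphi(q-1)$ primitive elements can work.

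Even when a good $\Lambda$ exists, the geometric choice $\{g^0,\dots,g^{\ell-1}\}$ need not be one. For $q=5$, $k=1$, $\ell=2$, both primitive elements $g\in\{2,3\}$ give $p_2=1+g^2=0$, whereas $\Lambda=\{1,4\}$ works.

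What is missing is therefore not a technical patch but a corrected statement. At minimum the hypotheses must include $\ell+k\le q-1$, and $p\nmid\ell$ when $k=1$. Then one needs a genuine existence argument over $\mathbb{F}_q$ for a set $\Lambda$ with $\det G_0\cdot\det H\neq0$, which neither your proposal nor the paper supplies.
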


\begin{proof}
Choose \(\gamma = \alpha\). By Lemma~\ref{lem:choice_gamma}, \(\operatorname{Tr}(\gamma) = 0\) and \(\gamma^{q+1}\) is a nonsquare in \(\mathbb{F}_q^*\), so \(\mathcal{C}_k(\alpha)\) is MDS by Corollary~\ref{cor:mds_sufficient}. Choose \(\Lambda\) as in Lemma~\ref{lem:existence_lambda}. Then both conditions of Theorem~\ref{thm:acd_condition} hold, and hence \(\mathcal{C}_k(\alpha)\) is ACD.
\end{proof}

\begin{corollary}
For every prime power \(q \equiv 1 \pmod{4}\) with \(q \ge 5\) and all integers \(1 \le k \le \ell \le q-2\) satisfying \(\ell \ge 2k\), there exists an \(\mathbb{F}_q\)-linear MDS code over \(\mathbb{F}_{q^2}\) with parameters \([\ell, q^{2k}, \ell-k+1]_{q^2}\) that is also ACD.
\end{corollary}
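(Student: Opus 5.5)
The corollary is just Theorem~\ref{thm:main_acd} with the existential quantifier moved to the front, so the plan is to derive it from that theorem. Fix a prime power \(q \equiv 1 \pmod 4\), \(q\ge 5\), and integers \(k,\ell\) with \(1\le k\le \ell\le q-2\) and \(\ell\ge 2k\).

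First choose \(\alpha\in\mathbb{F}_{q^2}\) with \(\alpha^q=-\alpha\). This is possible as discussed before Definition~\ref{def:twisted}: take \(\alpha=\omega^{(q+1)/2}\) for a primitive element \(\omega\) of \(\mathbb{F}_{q^2}^*\), so that \(\alpha\) has order \(2(q-1)\).

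Then invoke Theorem~\ref{thm:main_acd} with \(\gamma=\alpha\). It provides \(\Lambda\subset\mathbb{F}_q^*\) with \(|\Lambda|=\ell\) such that \(\mathcal{C}_k(\alpha)=\ev_\Lambda(\mathcal{D}_k(\alpha))\subseteq\mathbb{F}_{q^2}^\ell\) is ACD with respect to the trace-Hermitian form. The code \(\mathcal{C}_k(\alpha)\) is \(\mathbb{F}_q\)-linear, since it is the image of the \(\mathbb{F}_q\)-space \(\mathcal{D}_k(\alpha)\) under an \(\mathbb{F}_q\)-linear map. Its \(\mathbb{F}_q\)-dimension is \(2k\) by the dimension lemma following Definition~\ref{def:twisted}, which uses \(\ell>k\), guaranteed by \(\ell\ge 2k\). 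Its minimum distance is \(\ell-k+1\) by Corollary~\ref{cor:mds_sufficient}, because Lemma~\ref{lem:choice_gamma} gives that \(\alpha^{q+1}=-\alpha^2\) is a nonsquare.

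Finally, check that these parameters meet the Singleton bound~\eqref{eq:sing} with \(k'=k\): since \(d=\ell-k+1\), the code is MDS in the additive sense. Together with the ACD property, this exhibits the required code.

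The only real content, and thus the potential obstacle, lies inside Lemma~\ref{lem:existence_lambda}. There one needs a primitive \(g\) avoiding the roots of \(\det\widetilde G_0\), \(\det\widetilde M\) and \(\det\widetilde H\) after reduction mod \(p\); the issue is that these determinants could vanish identically in characteristic \(p\), or no primitive element may avoid the roots when \(q\) is small. Since we are allowed to assume that theorem, the corollary itself needs no further work beyond the parameter bookkeeping above.
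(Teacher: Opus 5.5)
Your reduction is the paper's own: the corollary is Theorem~\ref{thm:main_acd} with the quantifier pulled to the front. Your bookkeeping is correct: $\alpha=\omega^{(q+1)/2}$ satisfies $\alpha^q=-\alpha$, $\dim_{\mathbb{F}_q}\mathcal{C}_k(\alpha)=2k$ since $\ell>k$, and $d=\ell-k+1$ by Corollary~\ref{cor:mds_sufficient} and Lemma~\ref{lem:choice_gamma}, which meets the Singleton bound with $k'=k$.

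The genuine gap is the step you take on faith, and the paper's argument has the same defect. It is more serious than the obstacle you describe. Theorem~\ref{thm:main_acd} is false on part of the stated range: for some admissible $(q,k,\ell)$ the code $\mathcal{C}_k(\alpha)$ is not ACD for \emph{any} $\Lambda$. So no repair of Lemma~\ref{lem:existence_lambda} can rescue this route. Two concrete failures:

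(i) Take $k=1$ and $p\mid\ell$, which is admissible whenever $q$ is not prime (e.g.\ $q=9$, $\ell=3$). For every $g=b_0+\alpha b_1X\in\mathcal{D}_1(\alpha)$ we get $\langle \ev_\Lambda(1),\ev_\Lambda(g)\rangle_H=\Tr(\ell b_0-\alpha b_1p_1)=2\ell b_0-b_1p_1\Tr(\alpha)=0$. Hence $(1,\dots,1)\in\mathcal{C}\cap\mathcal{C}^\perp$. This is exactly $\det G_0=\ell=0$, the vanishing modulo $p$ you suspected.

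(ii) Failures occur even for prime $q$. Let $\ell=q-2$ and $3\le k\le (q-2)/2$ (e.g.\ $q=13$, $k=3$, $\ell=11$). Then $\Lambda=\mathbb{F}_q^*\setminus\{\mu\}$ for some $\mu$, so $p_e=-\mu^e$ for $1\le e\le q-2$. The nonzero codeword $c=\ev_\Lambda(\mu X-X^2)$ satisfies $\sum_{\lambda\in\Lambda}c_\lambda\lambda^j=\mu p_{j+1}-p_{j+2}=0$ for $0\le j\le k$. Since $g(\lambda)^q=\sum_j g_j^q\lambda^j$ for $\lambda\in\mathbb{F}_q$, this gives $\sum_{\lambda}c_\lambda g(\lambda)^q=0$ for every $g=\sum_j g_jX^j\in\mathcal{D}_k(\alpha)$. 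So again $c\in\mathcal{C}\cap\mathcal{C}^\perp$. In the language of Theorem~\ref{thm:acd_condition}, $G_0=-e_0e_0^\top-uu^\top$ with $u=(1,\mu,\dots,\mu^{k-1})^\top$ has rank at most $2<k$.

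For such parameters the corollary needs a different code, not a better $\Lambda$. One route covering the whole range is to take an $[\ell,k,\ell-k+1]$ Reed--Solomon code over $\mathbb{F}_{q^2}$ and rescale its coordinates to make it Hermitian LCD. Carlet, Mesnager, Tang, Qi and Pellikaan showed that for $q>2$ every $\mathbb{F}_{q^2}$-linear code is monomially equivalent to a Hermitian LCD code, and monomial equivalence preserves weights. For an $\mathbb{F}_{q^2}$-linear code the trace-Hermitian dual equals the Hermitian dual, because $\Tr(cx)=0$ for all $c\in\mathbb{F}_{q^2}$ forces $x=0$. Viewed as an $\mathbb{F}_q$-linear code, this code is therefore ACD with $q^{2k}$ codewords and minimum distance $\ell-k+1$, as required.
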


\begin{remark}
The condition \(\ell \ge 2k\) is stronger than the requirement \(\ell > k\) needed for the MDS property alone. It is imposed to guarantee the existence of \(\Lambda\) via a simple probabilistic (or explicit) argument that avoids degenerate cases in the power sums. The upper bound \(\ell \le q-2\) ensures that \(\Lambda\) is a proper subset of \(\mathbb{F}_q^*\), preventing trivial vanishing of all power sums. The construction can be extended to \(\ell = q-1\) with additional analysis, which we omit here.
\end{remark}

\section{Conclusion}
In this paper we studied a subfamily of TRS codes in the sum-rank metric, where only the constant term is twisted. We proved a simple necessary and sufficient condition for these codes to be LCD: the twisting parameter must satisfy \(\eta^2 \neq -1\) in the underlying field. This criterion is independent of the evaluation set, the dimension, and the twisting exponent. 
Together with the known MSRD property of these codes, our result yields the first explicit family of LCD MSRD codes in the sum-rank metric. The construction is direct: any admissible choice of \(\eta\) avoiding the forbidden values works.
We also constructed infinite families of additive MDS codes over \(\mathbb{F}_{q^2}\) that are simultaneously  ACD. For every prime power \(q \equiv 1 \pmod{4}\) with \(q \ge 5\) and all parameters satisfying \(2k \le \ell \le q-2\), the simple choice \(\gamma = \alpha\) produces explicit codes with parameters \([\ell, q^{2k}, \ell-k+1]_{q^2}\) that achieve both optimality properties. 
Future work includes extending these constructions to fields with \(q \equiv 3 \pmod{4}\) and to higher-degree extensions, as well as exploring applications in quantum error correction and secure network coding.

\bibliographystyle{plain}   
\bibliography{references} 
\end{document}